\newtheorem{Lemma}{Lemma}
\newtheorem{Theorem}{Theorem}
\newtheorem{Proposition}{Proposition}
\newtheorem{Remark}{Remark}
\title{On-the-Fly Communication-and-Computing for Distributed Tensor Decomposition Over MIMO Channels}
\author{}
\newcommand{\removelatexerror}{\let\@latex@error\@gobble}
\begin{document}
\author{{Xu~Chen},~{Erik G. Larsson}, and~{Kaibin~Huang}

\thanks{X. Chen and K. Huang are with the Department of Electrical and Electronic Engineering, The University of Hong Kong, Hong Kong (Email: \{chenxu, huangkb\}@eee.hku.hk). E. G. Larsson is with the Department of Electrical Engineering (ISY), Linköping University, 58183 Linköping, Sweden (Email: \{erik.g.larsson\}@liu.se). Corresponding author: K. Huang.}}
\maketitle

\begin{abstract}
\emph{Distributed  tensor decomposition} (DTD) is a fundamental  data-analytics technique that extracts  latent important properties  from high-dimensional multi-attribute  datasets distributed over edge devices. Conventionally its wireless implementation follows a one-shot approach that first computes local results at devices using local data and then aggregates them to a server with communication-efficient techniques such as \emph{over-the-air computation} (AirComp) for global computation. Such implementation  is confronted with the issues of  limited storage-and-computation capacities and link interruption, which motivates us to propose a framework of on-the-fly communication-and-computing (FlyCom$^2$) in this work. The proposed framework enables streaming computation with low complexity by leveraging a random sketching technique and achieves progressive global aggregation through the integration of progressive uploading and  \emph{multiple-input-multiple-output} (MIMO) AirComp. To develop FlyCom$^2$, an on-the-fly sub-space estimator is designed to take real-time sketches accumulated at the server to generate online estimates for the decomposition. Its performance is evaluated by deriving both deterministic and probabilistic error bounds using the perturbation theory and concentration of measure. Both results reveal that the decomposition error is inversely proportional to the population of sketching observations received by the server. To further rein in the noise effect on the error, we propose a threshold-based scheme to select a subset of sufficiently reliable received sketches for DTD at the server. Experimental results validate the performance gain of the proposed selection algorithm and show that compared to its one-shot counterparts, the proposed FlyCom$^2$ achieves comparable (even better in the case of large eigen-gaps) decomposition accuracy besides dramatically reducing devices' complexity costs. 
\end{abstract}

\section{Introduction}
In mobile networks,  enormous amounts of data are continuously being generated by billions of edge devices. Data analytics can be performed on distributed data to support a broad range of mobile applications, ranging from e-commerce to autonomous driving to IoT sensing~\cite{Bennis6Gvision,Letaief2021JSAC}. One basic class of techniques is \emph{tensor decomposition}, which extracts a low-dimensional structure from large-scale multi-attribute data with a tensor representation (a high-dimensional counterpart of a matrix) \cite{Larsson2010,TDreview2017}. A popular technique in this class, the Tucker decomposition, is a higher-dimensional extension of the \emph{singular-value decomposition} (SVD) that has supported diverse applications such as Google's  image recognition and  Cynefin's spotting of anomalies. In mobile networks, tensor decomposition can be implemented in a  centralized manner, which requires uploading of high-dimensional data  from many devices to  a central server. However, such implementation is stymied not only by a communication bottleneck  but also by  data privacy issues~\cite{MZChen2021,NiyatoFL2020}. 

In view of these issues, we focus on \emph{distributed tensor decomposition} (DTD) that avoids direct data uploading and reduces communication overhead by distributing the computation of data tensors to the devices. A direct distributed implementation would call for parallel  iterative methods such as alternating least squares~\cite{PARAFAC2005} and stochastic gradient descent~\cite{Kijung2017,zhang2021turning} over edge devices, which, however, results in high communication overhead  due to slow convergence. On the other hand, DTD can be realized via \emph{one-shot} distributed matrix analysis techniques~\cite{chen2022analog,JF2019estimation2019,VC2021DPCA}, since the desired orthogonal factor matrices can be estimated as the principal eigenspaces of unfolding matrices of the tensor along different modes~\cite{TensorReview}. These one-shot methods improve communication efficiency at a slight cost of decomposition accuracy by following two steps: 1) computing local estimates of the desired factor matrix at each of the devices using local data; 2) uploading and aggregating the local estimates at the server to compute a global estimate. Though alleviated, the communication bottleneck still exists due to the required aggregation of high-dimensional local tensors over potentially many devices. This multi-access problem can be addressed by using a technique called \emph{over-the-air computation} (AirComp), which exploits the waveform superposition property of a multi-access channel to realize over-the-air data aggregation in one shot~\cite{GZhu2021WCM,MZChen2021}. AirComp finds applications in communication-efficient  distributed computing and learning, and has been especially popular for federated learning (see, e.g., \cite{Eldar2021TSP, Ding2020TWC,Deniz2020TSP}). 

Considering a DTD system with AirComp, this work aims to solve two open problems. The first is the prohibitive cost and latency of computation at resource-constrained devices. A traditional one-shot DTD algorithm requires each device to perform eigenvalue decomposition of a potentially high-dimensional local dataset. The resulting computation complexity increases \emph{super-linearly} with the data dimensions~\cite{SVDComplexity2019}, and the consequent latency makes it difficult for DTD to support  emerging mission-critical applications~\cite{Petar2022TimePersp}. The second problem is that the one-shot transmissions by devices are susceptible to link disruption. Specifically, a loss of connection during the transmission of high-dimensional  local principal components can render the already received partial data useless. In other words, the existing designs lack the feature of  graceful performance degradation due to fading. 

To solve these problems,  we propose the novel framework  \emph{on-the-fly communication-and-computing} (FlyCom$^2$). Underpinning this framework is the use of a technique from randomized linear algebra, \emph{randomized sketching},  that generates low-dimensional random representations, called \emph{sketches},  of a high-dimensional data sample by projecting it onto randomly generated  low-dimensional sub-spaces~\cite{StreamingTD, randomprojection2011}. This technique has been successfully used in diverse applications ranging from online data tracking~\cite{Sketching} to matrix approximation~\cite{randomprojection2011}. In FlyCom$^2$, in place  of the  traditional  high-dimensional local eigenspaces,  each device generates  a stream of low-dimensional sketches for uploading to the server. Considering a \emph{multiple-input-multiple-output} (MIMO) channel, the simultaneous transmission of local sketches is enabled by spatially multiplexed AirComp~\cite{GXZhuAirComp2019}. Upon its arrival at the server, each aggregated sketch is immediately used to improve the global tensor decomposition, giving the name of FlyCom$^2$. Since random sketches serve as independent observations of the tensor, the server can produce an estimate for tensor decomposition in every time slot based on the  sketches already received.
The FlyCom$^2$ framework addresses the above-mentioned open problems in several aspects. First, random sketching that involves matrix multiplication has much lower complexity than eigen-decomposition and helps reduce the complexity of on-device computation. Second, the DTD accuracy depends on the number of successfully received sketches and hence is robust to loss of sketches in the transmission. This gives FlyCom$^2$ a graceful degradation property in the event of link disruptions or packet losses. Third, as the principal components of a high-dimensional tensor are usually low-dimensional, the progressive DTD at the server  is shown to approach its optimal performance quickly as the number of aggregated sketches increases, thereby reining in the communication overhead. Last, the parallel streaming communication and computation in FlyCom$^2$ are more efficient than the sequential operations of the traditional one-shot algorithms due to the communication-computation separation. 

In designing the FlyCom$^2$ framework, this work makes the following key contributions. 
\begin{itemize}
    \item \emph{On-the-Fly Sub-space Detection:} One key component of  the framework is an optimal on-the-fly detector at the server to estimate the tensor's  principal eigenspace from the received, noisy (aggregated) sketches. To design a \emph{maximum likelihood} (ML) detector, a whitening technique is used to pre-process the sketches so as to yield an effective global observation, which is shown to have the covariance matrix sharing the same eigenspace as the tensor. Using the result, the ML estimation problem is formulated as a \emph{sub-space alignment problem} and is solved in closed form. It is observed from the solution that the optimal estimate of the desired principal eigenspace approaches its ground truth as the said observation's dimensionality grows (or equivalently more sketches are received). 

    \item \emph{DTD Error Analysis:} The end-to-end performance of the FlyCom$^2$-based DTD system is measured by the squared error of the estimated principal eigenspace \emph{with respect to} (w.r.t.) its ground truth. Using perturbation theory and concentration of measure, bounds are derived on both the error and its expectation. These results reveal that the error consists of one residual component contributed by non-principal components and the other component caused by random sketching. Moreover, the error is observed to be linearly proportional to the number of received sketches, validating the earlier claims on the progressive nature of the designed DTD as well as its feature of graceful degradation. This also suggests a controllable trade-off between the decomposition accuracy and communication overhead, which is useful for practical implementation.
    
    \item \emph{Threshold-Based Sketch Selection:} Removing severely channel distorted sketches from use in the sub-space detection can lead to performance improvements. This motivates the design of a sketch-selection scheme that applies a threshold on a scaling factor in MIMO AirComp that reflects the received \emph{signal-to-noise ratio} (SNR) of an aggregated sketch.  We show that such a threshold can be efficiently optimized by an enumeration method whose complexity is polynomial in the population of received sketches.
\end{itemize}

The remainder of the paper is organized as follows. Section~\ref{section:model} introduces system models and metrics. Section \ref{section:overview} gives an overview of the proposed FlyCom$^2$ framework. Then, Section~\ref{section:design}  presents the design of the on-the-fly sub-space estimator and its error analysis. The sketch-selection scheme is proposed in Section~\ref{section:selection}. Numerical results are provided in Section~\ref{section:experiment}, followed by concluding remarks in Section~\ref{section:conclusion}.
%

\section{Models, Operations, and Metrics}\label{section:model}
We consider the support of DTD in a MIMO system, as illustrated in Fig.~\ref{fig:scenario}. The relevant models, operations and metrics are described in what follows. 
\begin{figure*}[t]
	\centering
	\begin{minipage}[b]{0.76\textwidth}
		\centering
		\includegraphics[width=\textwidth]{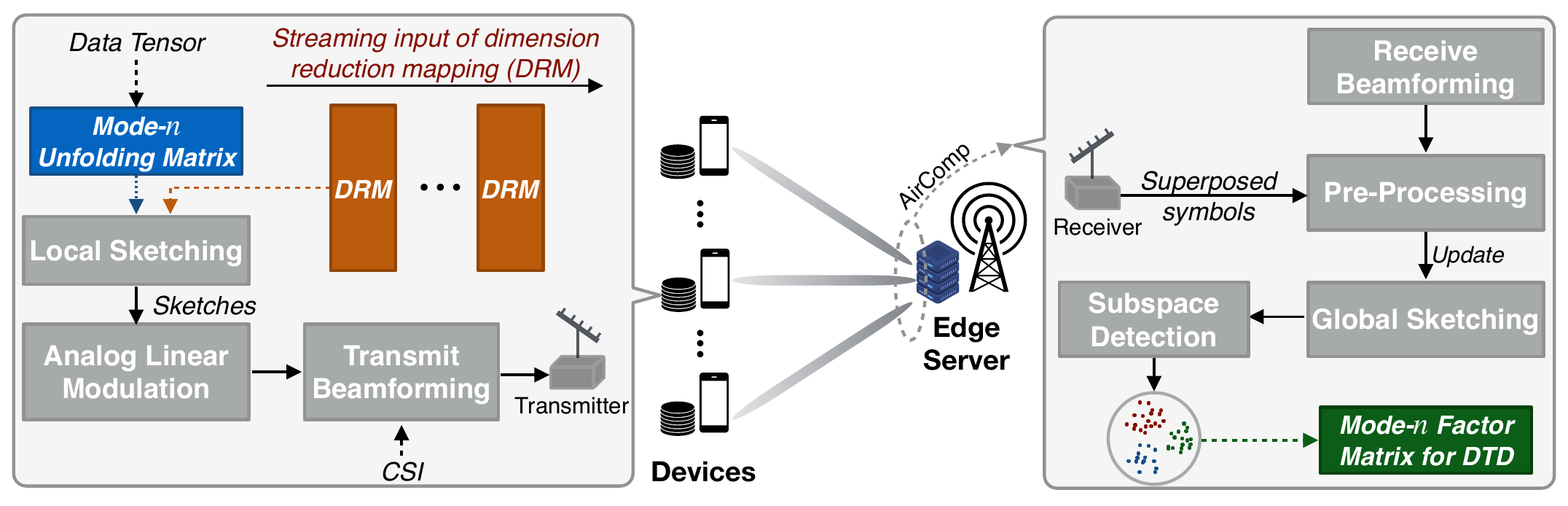}
		\vspace{-8mm}
	\end{minipage}
	\caption{On-the-fly communication-and-computing for distributed tensor decomposition.}
\label{fig:scenario}
\end{figure*}

\subsection{Distributed Tensor Decomposition} 
We consider the distributed implementation  of the popular Tucker method for tensor decomposition~\cite{TensorReview}. For ease of notation, the tensor is assumed to have $N$ modes; these modes generalize the concepts of columns and rows in matrices with the first $(N-1)$ modes corresponding to data features and mode $N$ indexing data samples. For instance, in a surveillance system, images captured by multiple cameras are expressed as local tensors with three modes indicating pixels, colors, and data sample indices, respectively. Let the samples collected by  device $k$ be represented by a local tensor $\mathcal{X}_k\in\mathbb{R}^{I_1^{(k)}\times I_2^{(k)}\cdots\times I_N^{(k)}}$, where $I_n^{(k)}$ denotes the dimensionality of mode $n$ of local tensor $k$. To simplify notation, we assume that local tensors have the same dimensions for their feature modes: $I_n^{(k)}=I_n$, $\forall k, 1\leq n\leq N-1$. Next, these local tensors are aggregated from $K$ devices to form a global tensor $\mathcal{X}\in\mathbb{R}^{I_1\times I_2\cdots\times I_N}$ with $I_N = \sum_kI_N^{(k)}$. The Tucker decomposition of $\mathcal{X}$ can be written as~\cite{TensorReview}
\begin{equation}\label{eq:Tucker}
    \mathcal{X} \approx \mathcal{G}\times_1\mathbf{U}_1\times_2\mathbf{U}_2\cdots\times_N\mathbf{U}_N\overset{\triangle}{=}\tilde{\mathcal{X}},
\end{equation}
where $\mathcal{G}\in\mathbb{R}^{r_1\times r_2\cdots\times r_N}$ represents a core tensor [that generalizes the singular values in SVD], $\mathbf{U}_n\in\mathbb{R}^{I_n\times r_n}$ is an orthogonal factor matrix corresponding to the $n$-th mode, satisfying $\mathbf{U}_n^{\top}\mathbf{U}_n = \mathbf{I}_{r_n}$ with $r_n$ ($r_n\leq I_n$) representing the number of principal dimensions, and $\times_n$ denotes the mode-$n$ matrix product~\cite{TensorReview}. In the sequel, we pursue these factor matrices $\{\mathbf{U}_n\}$ as they reveal the characteristics of the data tensor at different modes. Given $\{\mathbf{U}_n\}$, the computation of $\mathcal{G}$ is straightforward~\cite{Larsson2010}. In centralized computation with full data aggregation, $\{\mathbf{U}_n\}$ can be computed by using the \emph{higher-order SVD} approach~\cite{Larsson2010,StreamingTD}. In this approach, the tensor is first flattened along a chosen mode $n$ to yield a matrix $\mathbf{X}^{(n)}\in\mathbb{R}^{I_n\times J_n}$ with $J_n=\prod_{j=1,j\neq n}^NI_n$, termed \emph{mode-$n$ unfolding}; then the desired factor matrix is computed as $\mathbf{U}_n = [\mathbf{u}_1,\cdots,\mathbf{u}_{r_n}]$ where $\mathbf{u}_i$ is given by the $i$-th principal eigenvector of the mode-$n$ unfolding. Let this operation be represented by $\mathcal{S}_{r_n}(\cdot)$ and hence $\mathbf{U}_n = \mathcal{S}_{r_n}(\mathbf{X}^{(n)}(\mathbf{X}^{(n)})^{\top})$. 

In contrast with its centralized counterpart, DTD  computes the eigenspaces of different unfolding matrices distributively. This avoids the aggregation of raw data and preserves the data ownership~\cite{JF2019estimation2019,chen2022analog}. Considering the computation of $\mathbf{U}_n$, DTD goes through the following procedure: 1) local tensors are flattened along chosen mode $n$ to generate local unfoldings, denoted by $\{\mathbf{X}_k^{(n)}\}$; 2) devices compute low-dimensional component $\{\mathbf{S}_k\}$ from local unfoldings $\{\mathbf{X}_k^{(n)}\}$ through dimensionality reduction techniques; 3) the server gathers these local components from devices and aggregates them into a global component, denoted by $\mathbf{S}$, to yield a global estimate of the ground truth, $\mathbf{U}_n$. It is worth mentioning that the computation results $\{\mathbf{S}_k\}$ depend on a particular dimensionality reduction technique. For example,  when using \emph{principal component analysis} (PCA)~\cite{JF2019estimation2019,chen2022analog}, $\{\mathbf{S}_k\}$ are computed as the principal eigenspaces of $\{\mathbf{X}_k^{(n)}\}$ at the devices, and then the server averages them to estimate $\mathbf{U}_n$. In this work, a random sketching approach is adopted as elaborated in Section~\ref{section:overview}.

\subsection{MIMO Over-the-Air Computation}
FlyCom$^2$ builds on MIMO AirComp to  aggregate local results over the air, which is described as follows.  First, let $N_\text{r}$ and $N_{\text{t}}$, with $N_{\text{r}} \geq N_{\text{t}}$, denote the numbers of antennas at the edge server and each device, respectively. We assume perfect transmit \emph{channel state information} (CSI) as well as symbol-level and phase synchronization between the devices~\cite{GXZhuAirComp2019}. Time is slotted and then grouped to form \emph{coherence blocks} with $t$ denoting the block index. In each time slot, an $N_{\text{t}}\times 1$ vector of complex scalar symbols is transmitted over $N_{\text{t}}$ antennas. Then a coherence block spans at least $I$ symbol slots to support the transmission of an  $N_{\text{t}}\times I$ matrix. In an arbitrary block, say $t$, all edge devices transmit simultaneously
their $I\times M$ real matrices, denoted as $\{\mathbf{S}_{t,k}\}$, each of which is termed a \emph{matrix symbol}. As a result, the server receives an over-the-air aggregated  matrix symbol, $\mathbf{Y}_t$, as
\begin{equation*}
    \mathbf{Y}_t = \mathbf{A}_t\sum_{k=1}^K\mathbf{H}_{t,k}\mathbf{B}_{t,k}\mathbf{S}_{t,k}^{\top} + \mathbf{A}_t\mathbf{Z}_t,
\end{equation*}
where $\mathbf{H}_{t,k}\in \mathbb{C}^{N_{\text{r}}\times N_{\text{t}}}$ denotes the channel matrix corresponding to device $k$, $\mathbf{Z}_t$ models additive Gaussian noise with \emph{independent and identically distributed} (i.i.d.) elements of $\mathcal{CN}(0,\sigma^2)$,  $\mathbf{A}_t\in \mathbb{C}^{M\times N_{\text{r}}}$
and $\mathbf{B}_{t,k}\in \mathbb{C}^{N_{\text{t}}\times M}$ ($M\leq N_{\text{t}}$) denote receive and transmit beamforming matrices, respectively. To realize AirComp, we consider \emph{zero forcing} (ZF) transmit beamforming that inverts individual  MIMO channels~\cite{GXZhuAirComp2019}. Mathematically, conditioned on a fixed receive beamformer, transmit beamforming matrices are given as
\begin{equation}\label{eq:example:ZF}
    \mathbf{B}_{t,k} = \left(\mathbf{A}_{t}\mathbf{H}_{t,k}\right)^{H}\left(\mathbf{A}_{t}\mathbf{H}_{t,k}\mathbf{H}_{t,k}^{H}\mathbf{A}_{t}^{H}\right)^{-1}.
\end{equation}
The received matrix $\mathbf{Y}_t$ is then rewritten as
\begin{equation}\label{eq:WA}
    \mathbf{Y}_t =\sum_{k=1}^K\mathbf{S}_{t,k}^{\top}+\mathbf{A}_t\mathbf{Z}_t.
\end{equation}
In the absence of noise, the AirComp in~\eqref{eq:WA} provides the one-shot realization of the desired aggregation operation for DTD. The average transmission power of each device is constrained  to not exceed a power budget of $P$ per slot, i.e. $\forall t,k$
\begin{align}\label{eq:powerconstraint}
    \mathsf{E}\left[\Vert\mathbf{B}_{t,k}\mathbf{S}_{t,k}^{\top}\Vert_F^2\right] &= \mathsf{Tr}\left(\left(\mathbf{A}_{t}\mathbf{H}_{t,k}\mathbf{H}_{t,k}^{H}\mathbf{A}_{t}^{H}\right)^{-1}\mathsf{E}[\mathbf{S}_{t,k}^{\top}\mathbf{S}_{t,k}]\right)\nonumber\\
    &\leq IP.
\end{align}
The transmit SNR is then given by $\gamma = \frac{P}{\sigma^2}$. 

\subsection{Error Metric}
With the above-described  MIMO AirComp, a noisy version of $\mathbf{U}_n$, denoted by $\tilde{\mathbf{U}}_n$, will be computed progressively from a set of  received matrices $\{\mathbf{Y}_t\}$ (see Section~\ref{section:overview}). The $\tilde{\mathbf{U}}_n$ deviates from the ground truth due to both distributed computation and channel noise. The resulting error can be a performance metric of DTD in the wireless system. Mathematically, given $\tilde{\mathcal{X}}$ as the tensor derived from $\{\tilde{\mathbf{U}}_n\}$, the error is measured as $\Vert\mathcal{X}-\tilde{\mathcal{X}}\Vert_F^2$ that can be bounded as $ \Vert\mathcal{X}-\tilde{\mathcal{X}}\Vert_F^2 \leq \sum_{n=1}^N\Vert (\mathbf{I}_{I_n} - \tilde{\mathbf{U}}_n\tilde{\mathbf{U}}_n^{\top})\mathbf{X}^{(n)}\Vert_F^2$~\cite{StreamingTD}. This suggests that the overall error is determined by the error of independently decomposing each unfolding matrix. Hence, we define the DTD error as
\begin{equation}\label{eq:error}
    d\left(\tilde{\mathbf{U}}_n,\mathbf{X}^{(n)}\right) = \Vert (\mathbf{I}_{I_n} - \tilde{\mathbf{U}}_n\tilde{\mathbf{U}}_n^{\top})\mathbf{X}^{(n)}\Vert_F^2.
\end{equation}
%

\section{Overview of On-the-Fly Communication-and-Computing}\label{section:overview}
To support DTD over edge devices with limited computation power, we propose a FlyCom$^2$ framework as shown in Fig.~\ref{fig:scenario}. Next, we first briefly introduce the random approach exploited in FlyCom$^2$ and then explain how to use FlyCom$^2$ to support DTD.

\subsection{Data Dimensionality Reduction via Random Sketching}
 Recall that the DTD requires data dimensionality reduction on devices prior to transmission. For high-dimensional tensors, the traditional PCA technique becomes too complex for resource-constrained devices. To address this issue, we adopt a technique for random dimensionality reduction, known as \emph{random sketching}, which is simpler than PCA as it only relies on matrix multiplication and also requires a smaller number of passes over the datasets~\cite{randomprojection2011}. Specifically, given an $I\times J$ data matrix $\mathbf{X}$, random sketching uses a $J\times M$ random matrix, termed \emph{dimension reduction mapping} (DRM) and denoted by $\mathbf{\Omega}$, to map $\mathbf{X}$ to an $I\times M$ sketch matrix $\mathbf{S}$ with $J\gg M$: $\mathbf{S} = \mathbf{X}\mathbf{\Omega}$. The mapping $\mathbf{\Omega}$ can be composed of i.i.d. Gaussian elements and projects the high-dimensional $\mathbf{X}$ to random directions in a space of low dimensionality. Despite the random projection, the mutual vector distances between the rows of $\mathbf{X}$ can be approximately preserved such that the principal (column) eigenspace of the sketch, $\mathbf{S}$, constitutes a good approximation of  $\mathbf{X}$. The approximation accuracy grows as $M$ increases and becomes perfect when $M$ is equal to $J$~\cite{randomprojection2011}. Importantly, to estimate an $r$-dimensional principal eigenspace, random sketching has the complexity of $\mathcal{O}(IJM)$ and requires a single data pass of memory, as opposed to the complexity of $\mathcal{O}(\min\{I,J\}^2\times \max\{I,J\})$ and $\mathcal{O}(r)$ memory passes in PCA~\cite{SVDComplexity2019}.

\subsection{FlyCom$^2$-Based DTD}
Based on the preceding random-sketching technique, we propose the FlyCom$^2$ framework that decomposes the high-dimensional DTD into on-the-fly processing and transmission of streams of low-dimensional random sketches. Thereby, we not only overcome devices' resource constraints but also achieve a graceful reduction of DTD error as the communication time increases. Without loss of generality, we focus on the computation of the principal eigenspace $\mathbf{U}_n$ for an arbitrary data-feature mode $n$ with $n\in[1,\cdots, N-1]$. To simplify notation, the superscript $(n)$ and subscript $n$ are omitted. The detailed operations of FlyCom$^2$ are described as follows.

\subsubsection{On-the-Fly Computation at Devices} 
Each device streams a sequence of low-dimensional local sketches to the server by generating and transmitting them one by one in data packets. First, the progressive computation of local sketches at devices is introduced as follows. Let each local tensor, say $\mathcal{X}_k$ at device $k$, be flattened along the desired mode to generate the unfolding matrix $\mathbf{X}_k$. Then, in the (matrix-symbol) slot $t$, each device $k$ draws i.i.d. $\mathcal{N}(0,1)$ entries to form a $J\times M$ DRM, denoted by $\mathbf{\Omega}_{t,k}$, or retrieves it efficiently from memory~\cite{TRP_Tropp2018}. Then an $M$-dimensional local sketch for $\mathbf{X}_{k}$ can be computed as $\mathbf{S}_{t,k} = \mathbf{X}_{k} \mathbf{\Omega}_{t,k}$, which is then uploaded to the server immediately before computing the next sketch $\mathbf{S}_{t+1,k}$. This allows efficient communication-and-computation parallelization as shown in Fig.~\ref{fig:parallel}. 

\begin{figure*}[t]
	\centering
	\begin{minipage}[b]{0.6\textwidth}
		\centering
		\includegraphics[width=\textwidth]{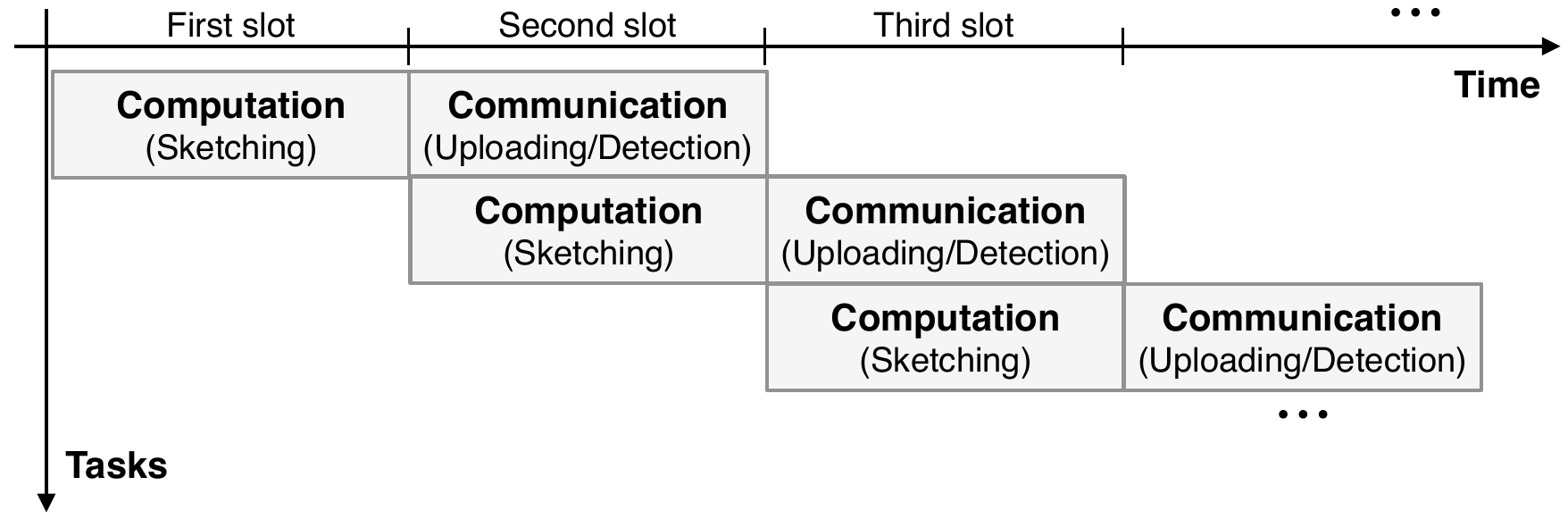}
		\vspace{-8mm}
	\end{minipage}
	\caption{Parallelization between communication and computation.}
\label{fig:parallel}
\end{figure*}

\subsubsection{On-the-Fly Global Random Sketching} 
MIMO AirComp is used for low-latency aggregation of the local sketches simultaneously streamed by devices. Local temporal sketches are progressively aggregated at the server by linearly modulating them as MIMO AirComp symbols. Consider the uploading of the $t$-th local sketches. It follows from~\eqref{eq:WA} that the matrix symbol received at the server can be written as
\begin{equation}\label{eq:receivedsymbol}
    \mathbf{Y}_t^{\top} = \sum_k\mathbf{X}_k\mathbf{\Omega}_{t,k} + \mathbf{Z}_t^{\top}\mathbf{A}_t^{\top}.
\end{equation}
To explain how to use $\mathbf{Y}_t$ in estimating the principal eigenspace of the global unfolding matrix $\mathbf{X}$, we first consider the case without channel noise, in which $\mathbf{Y}_t^{\top} = \sum_k\mathbf{X}_k\mathbf{\Omega}_{t,k}$.
Since the global tensor $\mathcal{X}$ is given by assembling local tensors along mode $N$, the corresponding global unfolding matrix, denoted by $\mathbf{X}$, is related to the local unfoldings $\{\mathbf{X}_k\}$ as
\begin{equation}\label{eq:distributed_samples}
    \mathbf{X}=[\mathbf{X}_1,\mathbf{X}_2,\cdots,\mathbf{X}_K].
\end{equation}
It follows that
\begin{equation*}
    \mathbf{Y}_t^{\top}= [\mathbf{X}_1,\cdots,\mathbf{X}_K][\mathbf{\Omega}_{t,1}^{\top},\cdots,\mathbf{\Omega}_{t,K}^{\top}]^{\top}\overset{\triangle}{=}\mathbf{X}\mathbf{F}_t,
\end{equation*}
where we define 
\begin{equation*}
    \mathbf{F}_{t} = [\mathbf{\Omega}_{t,1}^{\top},\cdots,\mathbf{\Omega}_{t,K}^{\top}]^{\top}.
\end{equation*}
As $\{\mathbf{\Omega}_{t,k}\}$ are mutually independent, $\mathbf{F}_{t}$ has i.i.d. $\mathcal{N}(0,1)$ elements and can be used as an $M$-dimensional DRM for randomly sketching $\mathbf{X}$. Therefore, in the absence of channel noise, $\mathbf{Y}_t$ gives an $M$-dimensional global sketch for $\mathbf{X}$. The dimension of the global sketch grows, thereby improving the DTD accuracy, as more aggregated local sketches are received (or equivalently as $t$ progresses), giving the name of on-the-fly global sketching.

\subsubsection{On-the-Fly Sub-space Detection at the Server}
In the case with channel noise, the server can produce an estimate of the desired principal eigenspace, $\mathbf{U}$, based on the noisy observations accumulated up to the current symbol slot. Specifically, in slot $t$,
given the current and past received matrix symbols, $\{\mathbf{Y}_{\ell}\}_{\ell\leq t}$, and the receive beamformers $\{\mathbf{A}_{\ell}\}_{\ell\leq t}$ (discussed in the sequel), the server estimates  $\mathbf{U}$ as
\begin{equation}
    \tilde{\mathbf{U}} = f(\{\mathbf{Y}_{\ell}\}_{\ell\leq t},\{\mathbf{A}_{\ell}\}_{\ell\leq t}),
\end{equation}
where the estimator $f(\cdot)$ is optimized in the sequel to minimize the DTD error in~\eqref{eq:error}.

Following the above discussion, the procedure for FlyCom$^2$-based DTD is summarized as follows.
\begin{figure}[thpb]
  \centering
\framebox{\parbox{0.98\textwidth}{To compute the principal eigenspace of the global unfolding matrix $\mathbf{X}$, initialize $t=1$, and FlyCom$^2$-based DTD repeats:
\begin{itemize}
    \setlength{\itemindent}{2.5em}
    \item[Step 1:] Each device, say device $k$, computes a local sketch using $\mathbf{S}_{t,k} = \mathbf{X}_{k} \mathbf{\Omega}_{t,k}$;
    \item[Step 2:] The server receives $\mathbf{Y}_t^{\top} = \sum_k\mathbf{X}_k\mathbf{\Omega}_{t,k} + \mathbf{Z}_t^{\top}\mathbf{A}_t^{\top}$ via MIMO AirComp;
    \item[Step 3:] The server computes an estimate of the eigenspace of $\mathbf{X}$: $\tilde{\mathbf{U}} = f(\{\mathbf{Y}_{\ell}\}_{\ell\leq t},\{\mathbf{A}_{\ell}\}_{\ell\leq t})$;
    \item[Step 4:] Set $t = t+1$;
\end{itemize}
Until $t=T$.
}}
\end{figure}

The key component of the FlyCom$^2$ framework, the on-the-fly sub-space estimator $f(\cdot)$, is designed in Section~\ref{section:design}. The performance of FlyCom$^2$-based DTD is enhanced using a sketch selection algorithm designed in Section~\ref{section:selection}.

%

\section{Optimal Sub-space Detection for FlyCom$^2$}\label{section:design}
In this section, we design the sub-space detection function of the FlyCom$^2$ framework, namely $f(\cdot)$ mentioned in the preceding section. It consists of two stages -- pre-processing of received symbols and the subsequent sub-space estimation, which are summarized in Algorithm~\ref{algo:detection} and designed in the following sub-sections. Furthermore, the resultant DTD error is analyzed.

\subsection{Pre-Processing of Received Matrix Symbols}
The pre-processing function is to accumulate received matrix symbols from slot $1$ to the current slot, $t$, and generate from them an effective matrix for the ensuing sub-space detection. The operation is instrumental for on-the-fly detection to obtain a progressive performance improvement. The design of the pre-processing takes several steps.
First, since the transmitted symbol $\mathbf{X}\mathbf{F}_{t}$ is real but the channel noise is complex, the real part of the received symbols, namely $\mathbf{Y}_t$ in~\eqref{eq:receivedsymbol}, gives an effective observation of the transmitted symbol\footnote{It is possible to transmit the coefficients of $\mathbf{X}\mathbf{F}_t$ over both the in-phase and quadrature channels, which halves air latency. The extension is straightforward (see, e.g.~\cite[Section II]{chen2022analog}) but complicates the notation without providing new insights. Hence, only the in-phase channel is used in this work.}. Let $\tilde{\mathbf{Y}}_t$ denote the effective observation in slot $t$ and $\tilde{\mathbf{Z}}_t$ the real part of $\mathbf{A}_t\mathbf{Z}_t$. It follows that
\begin{equation}\label{eq:observations}
    \tilde{\mathbf{Y}}_t = \Re\{{\mathbf{Y}_t^{\top}}\} = \mathbf{X}\mathbf{F}_{t}+\tilde{\mathbf{Z}}_t^{\top}.
\end{equation}
Second, the relation between the eigenspace of $\mathbf{X}$ and the accumulated observations up to the current slot is derived as follows. To this end, let the SVD of $\mathbf{X}$ be expressed as
\begin{equation}\label{eq:original_decomposition}
    \mathbf{X} =\mathbf{U}_{\mathbf{X}}\mathbf{\Sigma}_{\mathbf{X}}\mathbf{V}_{\mathbf{X}}^{\top},
\end{equation}
where $\mathbf{\Sigma}_{\mathbf{X}}$ comprises descending singular values along its diagonal. Then, the accumulation of the current and past observations, denoted by $\hat{\mathbf{Y}}_t = [\tilde{\mathbf{Y}}_1,\tilde{\mathbf{Y}}_2,\cdots,\tilde{\mathbf{Y}}_t]$, is a random Gaussian matrix as shown below.
\begin{Lemma}\label{Lemma:GaussianMatrix}
\emph{The accumulated aggregations, $\hat{\mathbf{Y}}_t$, can be decomposed as
\begin{equation*}
    \hat{\mathbf{Y}}_t = \mathbf{C}^{\frac{1}{2}}\mathbf{W}\mathbf{D}^{\frac{1}{2}},
\end{equation*}
where the left covariance matrix $\mathbf{C} = \mathbf{X}\mathbf{X}^{\top} + \frac{1}{2tM}\sigma^2\sum_{\ell\leq t}\mathsf{Tr}(\mathbf{A}_{\ell}^{H}\mathbf{A}_{\ell})\mathbf{I}_I$, the right covariance matrix $\mathbf{D} = \frac{\mathsf{Tr}(\mathbf{X}^{\top}\mathbf{X})\mathbf{I}_{tM} + \frac{1}{2}I\sigma^2\mathsf{diag}(\mathbf{A}_1\mathbf{A}_1^{H},\cdots,\mathbf{A}_t\mathbf{A}_t^{H})}{\mathsf{Tr}(\mathbf{X}^{\top}\mathbf{X}) + \frac{1}{2tM}I\sigma^2\sum_{\ell\leq t}\mathsf{Tr}(\mathbf{A}_{\ell}^{H}\mathbf{A}_{\ell})}$, and $\mathbf{W}$ is a random Gaussian matrix with i.i.d. $\mathcal{N}(0,1)$ entries.}    
\end{Lemma}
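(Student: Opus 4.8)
The plan is to argue that $\hat{\mathbf{Y}}_t$ is a zero-mean Gaussian random matrix, then to pin down its left and right covariance matrices by computing the two one-sided second moments $\mathsf E[\hat{\mathbf Y}_t\hat{\mathbf Y}_t^{\top}]$ and $\mathsf E[\hat{\mathbf Y}_t^{\top}\hat{\mathbf Y}_t]$; the factorization $\mathbf C^{\frac{1}{2}}\mathbf W\mathbf D^{\frac{1}{2}}$ with $\mathbf W$ having i.i.d.\ $\mathcal N(0,1)$ entries then follows from the matrix-normal identity $\mathsf{vec}(\mathbf A\mathbf W\mathbf B)=(\mathbf B^{\top}\!\otimes\mathbf A)\mathsf{vec}(\mathbf W)$, provided the total covariance is the Kronecker product $\mathbf D\otimes\mathbf C$.

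Gaussianity is immediate from \eqref{eq:observations}: each block $\tilde{\mathbf Y}_\ell=\mathbf X\mathbf F_\ell+\tilde{\mathbf Z}_\ell^{\top}$ is a linear function of the jointly Gaussian pair $(\mathbf F_\ell,\mathbf Z_\ell)$, and the DRMs $\{\mathbf F_\ell\}$ and noise matrices $\{\mathbf Z_\ell\}$ are mutually independent, so the concatenation $\hat{\mathbf Y}_t$ is zero-mean jointly Gaussian. That same cross-slot independence kills all cross-block moments, so $\mathsf E[\hat{\mathbf Y}_t\hat{\mathbf Y}_t^{\top}]=\sum_{\ell\le t}\mathsf E[\tilde{\mathbf Y}_\ell\tilde{\mathbf Y}_\ell^{\top}]$ and $\mathsf E[\hat{\mathbf Y}_t^{\top}\hat{\mathbf Y}_t]$ is block-diagonal with $\ell$-th block $\mathsf E[\tilde{\mathbf Y}_\ell^{\top}\tilde{\mathbf Y}_\ell]$.

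I would evaluate the per-slot moments using $\mathsf E[\mathbf F_\ell\mathbf F_\ell^{\top}]=M\mathbf I_J$ (hence $\mathsf E[\mathbf X\mathbf F_\ell\mathbf F_\ell^{\top}\mathbf X^{\top}]=M\mathbf X\mathbf X^{\top}$), $\mathsf E[\mathbf F_\ell^{\top}\mathbf X^{\top}\mathbf X\mathbf F_\ell]=\mathsf{Tr}(\mathbf X^{\top}\mathbf X)\mathbf I_M$, and — since the columns of $\tilde{\mathbf Z}_\ell=\Re\{\mathbf A_\ell\mathbf Z_\ell\}$ are i.i.d.\ with covariance $\tfrac12\sigma^2\Re\{\mathbf A_\ell\mathbf A_\ell^{H}\}$ — $\mathsf E[\tilde{\mathbf Z}_\ell^{\top}\tilde{\mathbf Z}_\ell]=\tfrac12\sigma^2\mathsf{Tr}(\mathbf A_\ell^{H}\mathbf A_\ell)\mathbf I_I$ and $\mathsf E[\tilde{\mathbf Z}_\ell\tilde{\mathbf Z}_\ell^{\top}]=\tfrac12 I\sigma^2\Re\{\mathbf A_\ell\mathbf A_\ell^{H}\}$. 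Summing over $\ell$ gives $\mathsf E[\hat{\mathbf Y}_t\hat{\mathbf Y}_t^{\top}]=tM\,\mathbf C$ with $\mathbf C$ exactly as stated, while the $\ell$-th block of $\mathsf E[\hat{\mathbf Y}_t^{\top}\hat{\mathbf Y}_t]$ is $\mathsf{Tr}(\mathbf X^{\top}\mathbf X)\mathbf I_M+\tfrac12 I\sigma^2\Re\{\mathbf A_\ell\mathbf A_\ell^{H}\}$, which equals $\mathsf{Tr}(\mathbf C)\,\mathbf D$ once one notes $\mathsf{Tr}(\mathbf C)=\mathsf{Tr}(\mathbf X^{\top}\mathbf X)+\tfrac{I\sigma^2}{2tM}\sum_{\ell\le t}\mathsf{Tr}(\mathbf A_\ell^{H}\mathbf A_\ell)$ and $\mathsf{Tr}(\mathbf D)=tM$. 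Since a matrix-normal $\mathbf M\sim\mathcal{MN}(\mathbf 0,\mathbf C,\mathbf D)$ obeys $\mathsf E[\mathbf M\mathbf M^{\top}]=\mathsf{Tr}(\mathbf D)\mathbf C$ and $\mathsf E[\mathbf M^{\top}\mathbf M]=\mathsf{Tr}(\mathbf C)\mathbf D$, the normalization in the lemma is precisely the one reconciling these two moment constraints, which also resolves the scalar ambiguity of the factorization $\mathbf C^{\frac{1}{2}}\mathbf W\mathbf D^{\frac{1}{2}}$.

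I expect the delicate step to be upgrading these matched moments to a genuine Kronecker covariance. The exact covariance of $\mathsf{vec}(\hat{\mathbf Y}_t)$ is block-diagonal with blocks $\mathbf I_M\otimes\mathbf X\mathbf X^{\top}+\tfrac{\sigma^2}{2}\Re\{\mathbf A_\ell\mathbf A_\ell^{H}\}\otimes\mathbf I_I$, a sum of two Kronecker terms with incompatible factors; it collapses to the single product $\mathbf D\otimes\mathbf C$ exactly when the per-slot noise covariance is a common scalar multiple of $\mathbf I_M$ (for instance, a fixed-energy receive beamformer with orthonormal rows reused every slot), and is otherwise the moment-matched / closest-Kronecker surrogate obtained by replacing $\mathbf X\mathbf X^{\top}$ and $\mathbf A_\ell\mathbf A_\ell^{H}$ by their trace-normalized scalar proxies in the cross terms. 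I would close the argument either by invoking the operative modeling assumptions that make the identity exact, or by bounding the neglected cross terms and showing that they do not perturb the principal-subspace estimate that Lemma~\ref{Lemma:GaussianMatrix} feeds into. A minor technical point worth stating explicitly is the complex-to-real noise bookkeeping: $\Re\{\mathbf A_\ell\mathbf Z_\ell\}$ carries covariance $\tfrac12\sigma^2\Re\{\mathbf A_\ell\mathbf A_\ell^{H}\}$ rather than $\sigma^2\mathbf A_\ell\mathbf A_\ell^{H}$, with only the (real) diagonal surviving in the trace terms.
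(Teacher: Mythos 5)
Your calculations are correct and follow essentially the same route as the paper's own proof: establish that $\hat{\mathbf{Y}}_t$ is zero-mean Gaussian, then identify $\mathbf{C}$ and $\mathbf{D}$ by matching the one-sided second moments $\mathsf{E}[\hat{\mathbf{Y}}_t\hat{\mathbf{Y}}_t^{\top}]=\mathsf{Tr}(\mathbf{D})\,\mathbf{C}$ and $\mathsf{E}[\hat{\mathbf{Y}}_t^{\top}\hat{\mathbf{Y}}_t]=\mathsf{Tr}(\mathbf{C})\,\mathbf{D}$, which are exactly the two computations in the paper's appendix (including the $\tfrac{1}{2}\sigma^2$ real-part bookkeeping that the paper writes slightly loosely as $\mathbf{A}_{\ell}\mathbf{A}_{\ell}^{H}$ rather than $\Re\{\mathbf{A}_{\ell}\mathbf{A}_{\ell}^{H}\}$). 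Where you go beyond the paper is the final step: the paper simply asserts that the Gaussian matrix "can be decomposed" as $\mathbf{C}^{\frac{1}{2}}\mathbf{W}\mathbf{D}^{\frac{1}{2}}$, which presupposes that the covariance of $\mathsf{vec}(\hat{\mathbf{Y}}_t)$ is separable, i.e.\ equal to $\mathbf{D}\otimes\mathbf{C}$; as you correctly note, matched one-sided moments alone do not imply this. Your diagnosis of the exact covariance is right: it is block-diagonal with $\ell$-th block $\mathbf{I}_M\otimes\mathbf{X}\mathbf{X}^{\top}+\tfrac{\sigma^2}{2}\Re\{\mathbf{A}_{\ell}\mathbf{A}_{\ell}^{H}\}\otimes\mathbf{I}_I$, so with the Section~\ref{section:selection} beamformers ($\mathbf{A}_{\ell}\mathbf{A}_{\ell}^{H}\propto\mathbf{I}_M$) each slot is separable, but global separability---and hence the lemma as an exact distributional identity---additionally requires the per-slot noise scalings to coincide; otherwise the factorization, and the whitened observation in~\eqref{eq:whitening} built on it, is a moment-matched approximation whose per-column covariances nevertheless all share the eigenspace $\mathbf{U}_{\mathbf{X}}$, which is what the downstream subspace estimator actually exploits. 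In short, your proposal reproduces the paper's argument and in addition makes explicit a gap that the paper's proof leaves implicit; to fully close it one must either invoke the beamformer/noise-homogeneity conditions you name or bound the non-separable residual, neither of which the paper itself does.
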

\begin{proof}
    See Appendix~\ref{Apdx:representation}
\end{proof}

Third, based on~\eqref{eq:original_decomposition}, the covariance matrix, $\mathbf{C}$, in Lemma~\ref{Lemma:GaussianMatrix} can be rewritten as 
\begin{equation}
    \mathbf{C} \overset{\triangle}{=}\mathbf{U}_{\mathbf{X}}\mathbf{\Lambda}\mathbf{U}_{\mathbf{X}}^{\top},
\end{equation}
where we define $\mathbf{\Lambda} = \mathbf{\Sigma}_{\mathbf{X}}^2 + \frac{1}{2tM}\sigma^2\sum_{\ell\leq t}\mathsf{Tr}(\mathbf{A}_{\ell}^{H}\mathbf{A}_{\ell})\mathbf{I}_I$. Hence, the square root, $\mathbf{C}^{\frac{1}{2}}$, is given as
\begin{equation}\label{eq:effectivecovariance}
    \mathbf{C}^{\frac{1}{2}} = \mathbf{U}_{\mathbf{X}}\mathbf{\Lambda}^{\frac{1}{2}}.
\end{equation}
\begin{Remark}[Effective Sketching with Channel Noise]\label{Remark:analogtransmission}
    \emph{According to Lemma~\ref{Lemma:GaussianMatrix} and~\eqref{eq:effectivecovariance}, the accumulated observations, $\hat{\mathbf{Y}}_t$,  gives a sketch of the matrix $\mathbf{U}_{\mathbf{X}}\mathbf{\Lambda}^{\frac{1}{2}}$ using a Gaussian DRM with the covariance of $\mathbf{D}$. The matrix $\mathbf{U}_{\mathbf{X}}\mathbf{\Lambda}^{\frac{1}{2}}$ and the unfolding matrix $\mathbf{X}$ share the eigenspace, $\mathbf{U}_{\mathbf{X}}$. Furthermore, as $\mathbf{\Lambda}$ retains the  singular values in descending order, the top-$r$ principal eigenspace of $\mathbf{U}_{\mathbf{X}}\mathbf{\Lambda}^{\frac{1}{2}}$ is identical to that of $\mathbf{X}$ for any $1\leq r\leq M$.}
\end{Remark}

Finally, according to the preceding discussion, the desired principal eigenspace of $\mathbf{X}$ can be estimated from the sketch $\hat{\mathbf{Y}}_t$. It is known that randomized sketching prefers DRMs with i.i.d. entries~\cite{randomprojection2011}. To improve the performance, $\hat{\mathbf{Y}}_t$ can be further ``whitened'' to equalize the right covariance $\mathbf{D}$. Specifically, let $\hat{\mathbf{Y}}_t$ be right-multiplied by $\mathbf{D}^{-\frac{1}{2}}$ to yield the final \emph{effective observation} in time slot $t$ as
\begin{equation}\label{eq:whitening}
\boxed{\mathbf{\Phi}_t = \hat{\mathbf{Y}}_t\mathbf{D}^{-\frac{1}{2}} = \mathbf{U}_{\mathbf{X}}\mathbf{\Lambda}^{\frac{1}{2}}\mathbf{W}.}
\end{equation}
To compute the covariance matrix $\mathbf{D}$, the server needs to acquire the value of $\mathsf{Tr}(\mathbf{X}\mathbf{X}^{\top}) = \sum_{k}\mathsf{Tr}(\mathbf{X}_k\mathbf{X}_k^{\top})$. Note that each term in the summation, say $\mathsf{Tr}(\mathbf{X}_k\mathbf{X}_k^{\top})$, relates to the covariance of transmitted symbols, $\mathsf{E}[\mathbf{S}_{t,k}^{\top}\mathbf{S}_{t,k}]$, as
\begin{equation*}
    \mathsf{E}[\mathbf{S}_{t,k}^{\top}\mathbf{S}_{t,k}] = \mathsf{E}[\mathbf{\Omega}_{t,k}^{\top}\mathbf{X}_k^{\top}\mathbf{X}_k\mathbf{\Omega}_{t,k}] = \mathsf{Tr}(\mathbf{X}_k^{\top}\mathbf{X}_k)\mathbf{I}_{M}.
\end{equation*}
Then, $\mathsf{Tr}(\mathbf{X}_k^{\top}\mathbf{X}_k)$ can be acquired at the server by one-time feedback.

\subsection{Optimal Sub-space Estimation} 
In this sub-section, the principal eigenspace of the unfolding matrix $\mathbf{X}$ with dimensions fixed as $r$, is estimated from the effective observation given in~\eqref{eq:whitening} under the ML criterion. First, using~\eqref{eq:whitening}, the distribution of the observation $\mathbf{\Phi}_t$ conditioned on $\mathbf{U}$ and $\mathbf{\Lambda}$ is given as
\begin{equation*}
    \mathsf{Pr}\left(\mathbf{\Phi}_t|\mathbf{U}_{\mathbf{X}},\mathbf{\Lambda}\right) = \frac{\exp\left(-\frac{tM}{2}\mathsf{Tr}\left(\mathbf{\Phi}_t^{\top}\mathbf{U}_{\mathbf{X}}\mathbf{\Lambda}^{-1}\mathbf{U}_{\mathbf{X}}^{\top}\mathbf{\Phi}_t\right)\right)}{(2\pi)^{ItM/2}\mathsf{det}(\mathbf{\Lambda})^{tM/2}}.
\end{equation*}
This yields the logarithm of the likelihood function, required for ML estimation, as
\begin{align}\label{eq:likelihood}
    \mathcal{L}\left(\mathbf{U}_{\mathbf{X}};\mathbf{\Phi}_t,\mathbf{\Lambda}\right) = &\ln\left(\mathsf{Pr}\left(\mathbf{\Phi}_t|\mathbf{U}_{\mathbf{X}},\mathbf{\Lambda}\right)\right),\nonumber\\
     =& -\frac{tM}{2}\mathsf{Tr}\left(\mathbf{\Phi}_t^{\top}\mathbf{U}_{\mathbf{X}}\mathbf{\Lambda}^{-1}\mathbf{U}_{\mathbf{X}}^{\top}\mathbf{\Phi}_t\right) \nonumber\\
    &-\frac{ItM}{2}\ln(2\pi) - \frac{tM}{2}\ln(\mathsf{det}(\mathbf{\Lambda})).
\end{align}
Let $\mathbf{U}$ denote the desired $r$-dimensional principal components of $\mathbf{X}$, as obtained from splitting $\mathbf{U}_{\mathbf{X}} = [\mathbf{U},\mathbf{U}^{\bot}]$. It is observed from~\eqref{eq:likelihood} that only the first term depends on the variable $\mathbf{U}$. Then letting $(\tilde{\mathbf{U}},\tilde{\mathbf{U}}_{\mathbf{X}})$ denote an estimate of $(\mathbf{U},\mathbf{U}_{\mathbf{X}})$, the ML-estimation problem can be formulated as
\begin{equation}\label{problem:MLestimation}
    \begin{aligned}
    	\mathop{\min}_{\tilde{\mathbf{U}}}&\ \ \mathsf{Tr}\left(\mathbf{\Phi}_t^{\top}\tilde{\mathbf{U}}_{\mathbf{X}}\mathbf{\Lambda}^{-1}\tilde{\mathbf{U}}_{\mathbf{X}}^{\top}\mathbf{\Phi}_t\right)\\
    	\mathrm{s.t.}&\ \ \tilde{\mathbf{U}}_{\mathbf{X}}^{\top}\tilde{\mathbf{U}}_{\mathbf{X}}=\tilde{\mathbf{U}}_{\mathbf{X}}\tilde{\mathbf{U}}_{\mathbf{X}}^{\top} = \mathbf{I},\\
     &\ \ \tilde{\mathbf{U}}_{\mathbf{X}}=[\tilde{\mathbf{U}},\tilde{\mathbf{U}}^{\bot}].
    \end{aligned}
\end{equation}

Despite the non-convex orthogonality constraints, the problem in~\eqref{problem:MLestimation} can be solved optimally in closed form, as follows. First, define the eigenvalue decomposition $\mathbf{\Phi}_t\mathbf{\Phi}_t^{\top} = \mathbf{Q}\mathbf{\Gamma}\mathbf{Q}^T$ with $\mathbf{Q}=[\mathbf{q}_1,\cdots,\mathbf{q}_I]$ and $\mathbf{\Gamma}=\mathsf{diag}(\gamma_1,\cdots,\gamma_I)$ with eigenvalues arranged in a descending order. Then, given $\mathbf{\Lambda}=\mathsf{diag}(\lambda_1,\cdots,\lambda_I)$ and $\mathbf{U}_{\mathbf{X}}=[\mathbf{u}_1,\cdots,\mathbf{u}_I]$, the objective function of~\eqref{problem:MLestimation} can be rewritten as
\begin{equation*}
    \mathsf{Tr}\left(\mathbf{\Lambda}^{-1}\mathbf{U}_{\mathbf{X}}^{\top}\mathbf{\Phi}_t\mathbf{\Phi}_t^{\top}\mathbf{U}_{\mathbf{X}}\right) = \sum_{i=1}^I\sum_{j=1}^I\lambda_j^{-1}\gamma_i(\mathbf{q}_i^{\top}\mathbf{u}_j)^2.
\end{equation*}
Next, define $x_{ij} = \mathbf{q}_i^{\top}\mathbf{u}_j$ and rewrite the constraints in~\eqref{problem:MLestimation} as $\sum_{i=1}^Ix_{ij}^2 = \mathbf{u}_j^{\top}\mathbf{Q}\mathbf{Q}^{\top}\mathbf{u}_j=1$ and $\sum_{j=1}^Ix_{ij}^2 = \mathbf{q}_i^{\top}\mathbf{U}\mathbf{U}^{\top}\mathbf{u}_i=1$. Without loss of optimality, such constraints can be further relaxed as $\sum_{i=1}^Ix_{ij}^2\geq 1$ and $\sum_{j=1}^Ix_{ij}^2\geq 1$. This allows the problem in~\eqref{problem:MLestimation} to be reformulated as a convex problem:
\begin{equation}\label{problem:new}
    \begin{aligned}
    	\mathop{\min}_{\{x_{ij}\}}&\ \ \sum_{i=1}^I\sum_{j=1}^I\lambda_j^{-1}\gamma_ix_{ij}^2\\
    	\mathrm{s.t.}&\ \ \sum_{l=1}^Ix_{il}^2\geq 1,\ \sum_{l=1}^Ix_{lj}^2\geq 1,\ \forall i,j.
    \end{aligned}
\end{equation}
Since $\lambda_1\geq\lambda_2\geq\cdots\geq\lambda_I$, the objective of~\eqref{problem:new} subject to the constraints is lower bounded as 
\begin{equation}
    \sum_{i=1}^I\sum_{j=1}^I\lambda_j^{-1}\gamma_ix_{ij}^2\geq \sum_{i=1}^I\lambda_i^{-1}\gamma_i.
\end{equation}
The lower bound can be achieved by letting $x_{ii} = 1$, $\forall i$ and $x_{ij} = 0$, $\forall i\neq j$. The optimal solution for~\eqref{problem:new} follows as shown below.
\begin{Proposition}
    \emph{Based on the ML criterion, in slot $t$, the optimal on-the-fly estimate of the $r$-dimensional principal components of the unfolding matrix, $\mathbf{X}$, is denoted as $\tilde{\mathbf{U}}^{\star}$ and given as
    \begin{equation}\label{eq:MLestimate}
    \tilde{\mathbf{U}}^{\star} = [\mathbf{q}_1,\cdots,\mathbf{q}_r] = \mathcal{S}_r\left(\mathbf{\Phi}_t\mathbf{\Phi}_t^{\top}\right),
\end{equation}
    where $\mathbf{\Phi}_t$ is the effective observation in slot $t$ as given in~\eqref{eq:whitening} and we recall $\mathcal{S}_r(\cdot)$ to yield the $r$-dimensional principal eigenspace of its argument.}
\end{Proposition}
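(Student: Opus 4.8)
The heavy lifting --- rewriting the ML problem~\eqref{problem:MLestimation} as the convex program~\eqref{problem:new}, lower-bounding its objective by $\sum_i \lambda_i^{-1}\gamma_i$, and checking that this value is attained at $x_{ii}=1$, $x_{ij}=0$ ($i\neq j$) --- has already been done before the statement. The plan for the proposition is therefore just to (i) verify that this relaxed minimizer is in fact optimal for the original, unrelaxed problem, (ii) undo the change of variables to read off the estimator, and (iii) recognize the result as $\mathcal{S}_r(\mathbf{\Phi}_t\mathbf{\Phi}_t^{\top})$.

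For (i): replacing the equalities $\sum_i x_{ij}^2=1$ and $\sum_j x_{ij}^2=1$ by the inequalities $\geq 1$ can only decrease the optimal value, while the point $x_{ij}=\delta_{ij}$ that attains the lower bound satisfies the equalities; hence the relaxation loses nothing and this point solves~\eqref{problem:MLestimation}. For (ii): since $x_{ij}=\mathbf{q}_i^{\top}\tilde{\mathbf{u}}_j$ is the $(i,j)$ entry of the orthogonal matrix $\mathbf{Q}^{\top}\tilde{\mathbf{U}}_{\mathbf{X}}$, the condition $x_{ij}=\delta_{ij}$ forces $\tilde{\mathbf{U}}_{\mathbf{X}}=\mathbf{Q}$ --- up to a sign on each column and, when $\mathbf{\Phi}_t\mathbf{\Phi}_t^{\top}$ or $\mathbf{\Lambda}$ has repeated entries, an orthogonal rotation within the affected eigenspace, neither of which alters the spanned subspace. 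Taking the leading $r$ columns gives $\tilde{\mathbf{U}}^{\star}=[\mathbf{q}_1,\cdots,\mathbf{q}_r]$. For (iii): since $\mathbf{\Phi}_t\mathbf{\Phi}_t^{\top}=\mathbf{Q}\mathbf{\Gamma}\mathbf{Q}^{\top}$ with $\gamma_1\geq\cdots\geq\gamma_I$, these are exactly its $r$ dominant eigenvectors, i.e., $\mathcal{S}_r(\mathbf{\Phi}_t\mathbf{\Phi}_t^{\top})$. I would also point out that the whole argument uses only the ordering $\lambda_1\geq\cdots\geq\lambda_I$, which holds because $\mathbf{\Lambda}$ equals $\mathbf{\Sigma}_{\mathbf{X}}^2$ plus a positive multiple of $\mathbf{I}$; the actual values of $\mathbf{\Lambda}$ never enter, so the estimate is the same for every admissible $\mathbf{\Lambda}$ and is computable without knowing the singular values of $\mathbf{X}$.

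The genuine obstacle --- and the place a reader should slow down --- is the lower bound $\sum_{i,j}\lambda_j^{-1}\gamma_i x_{ij}^2\geq\sum_i\lambda_i^{-1}\gamma_i$ on the relaxed feasible set, a constrained rearrangement inequality. I see two clean ways to establish it together with its equality condition. First, work directly on the trace objective $\mathsf{Tr}(\tilde{\mathbf{U}}_{\mathbf{X}}^{\top}\mathbf{\Phi}_t\mathbf{\Phi}_t^{\top}\tilde{\mathbf{U}}_{\mathbf{X}}\mathbf{\Lambda}^{-1})$ and apply von Neumann's trace inequality: over orthogonal $\tilde{\mathbf{U}}_{\mathbf{X}}$ this trace is minimized by pairing the $i$-th largest eigenvalue $\gamma_i$ of $\mathbf{\Phi}_t\mathbf{\Phi}_t^{\top}$ with the $i$-th smallest eigenvalue $\lambda_i^{-1}$ of $\mathbf{\Lambda}^{-1}$, with equality precisely when $\tilde{\mathbf{U}}_{\mathbf{X}}^{\top}\mathbf{\Phi}_t\mathbf{\Phi}_t^{\top}\tilde{\mathbf{U}}_{\mathbf{X}}=\mathbf{\Gamma}$. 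Second, after the substitution $p_{ij}=x_{ij}^2\geq 0$, problem~\eqref{problem:new} becomes a transportation LP with row and column masses at least one and cost $a_{ij}=\gamma_i\lambda_j^{-1}$; since $a_{ij}+a_{i'j'}-a_{ij'}-a_{i'j}=(\gamma_i-\gamma_{i'})(\lambda_j^{-1}-\lambda_{j'}^{-1})\leq 0$ for $i<i'$, $j<j'$, the cost is a Monge matrix, and the standard exchange argument (equivalently LP duality on the transportation polytope) shows the diagonal plan $p_{ij}=\delta_{ij}$ is optimal. With this bound and its equality case in hand, steps (i)--(iii) above are routine bookkeeping.
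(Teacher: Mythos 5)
Your first route (von Neumann) is sound, and it is genuinely different from the paper's own argument: the paper proves the proposition precisely by the relaxation you take as given --- it replaces orthogonality by the row/column conditions $\sum_l x_{il}^2\geq 1$, $\sum_l x_{lj}^2\geq 1$, asserts the lower bound $\sum_{i,j}\lambda_j^{-1}\gamma_i x_{ij}^2\geq\sum_i\lambda_i^{-1}\gamma_i$ on that relaxed set, and notes attainment at $x_{ij}=\delta_{ij}$. Your von Neumann argument instead works directly on the orthogonally constrained trace $\mathsf{Tr}\left(\mathbf{\Lambda}^{-1}\tilde{\mathbf{U}}_{\mathbf{X}}^{\top}\mathbf{\Phi}_t\mathbf{\Phi}_t^{\top}\tilde{\mathbf{U}}_{\mathbf{X}}\right)$, gives the bound and its equality case in one step, needs no relaxation at all, and your read-off of $\tilde{\mathbf{U}}^{\star}=[\mathbf{q}_1,\cdots,\mathbf{q}_r]$ (up to signs and rotations inside degenerate blocks, provided no tie straddles index $r$) is exactly right. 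That is the cleaner and, as it turns out, the only rigorous path among the ones on the table.

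The genuine gap is your second route, and with it the ``relaxation loses nothing'' bookkeeping of step (i). The Monge/exchange (northwest-corner) argument is valid for transportation problems with \emph{equality} marginals; on the relaxed polytope of~\eqref{problem:new}, where row and column masses are only bounded \emph{below}, the diagonal plan is in general not optimal: a row with small $\gamma_i$ can over-supply the expensive columns almost for free while the heavy rows dump their unit mass into the cheapest column. Concretely, take $I=3$, $\gamma=(2,1,0)$ (zero eigenvalues of $\mathbf{\Phi}_t\mathbf{\Phi}_t^{\top}$ occur whenever $tM<I$) and $\lambda^{-1}=(1,10,10)$; the plan $p_{11}=p_{21}=p_{32}=p_{33}=1$ is feasible for~\eqref{problem:new} and costs $3$, whereas $\sum_i\lambda_i^{-1}\gamma_i=12$. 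So the claimed lower bound on the \emph{relaxed} problem is false --- this is, incidentally, also the weak step in the derivation preceding the proposition in the paper --- and no exchange or LP-duality argument can establish it. The statement that is true, and all that the proposition needs, is the equality-marginal version: since $\mathbf{Q}^{\top}\tilde{\mathbf{U}}_{\mathbf{X}}$ is orthogonal, $(x_{ij}^2)$ is doubly stochastic, and Birkhoff plus the rearrangement inequality (equivalently, your von Neumann route) give the bound and its attainment over the original feasible set of~\eqref{problem:MLestimation}. I would present that argument directly and drop the relaxation entirely.
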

\begin{Remark}[Minimum Number of FlyCom$^2$ Operations]
    \emph{For the result in~\eqref{eq:MLestimate} to hold, the dimensions of the current effective observations  $\mathbf{\Phi}_t$ should be larger than those of $\mathbf{U}$, i.e. $tM\geq r$. This implies that the FlyCom$^2$ should run at least $t\geq r/M$ rounds to enable the estimation of an $r$-dimensional principal eigenspace of the tensor.}
\end{Remark}
\begin{algorithm}[t]
\caption{On-the-Fly Sub-space Detection for FlyCom$^2$ Based DTD}
\label{algo:detection}
\textbf{Initialize:} Received in-phase matrix symbols $\{\tilde{\mathbf{Y}}_{\ell}\}_{\ell\leq t}$ in slot $t$\;
\textbf{Perform:}\\
    \begin{enumerate}
        \item[1:] \emph{Aggregation:} Aggregate all received matrix symbol $\{\tilde{\mathbf{Y}}_{\ell}\}_{\ell\leq t}$ into $\hat{\mathbf{Y}}_t = [\tilde{\mathbf{Y}}_1,\cdots,\tilde{\mathbf{Y}}_t]$;
        \item[2:] \emph{Whitening:} Compute the whitened version, $\mathbf{\Phi}_t$, of the aggregated matrix $\hat{\mathbf{Y}}_t$ by~\eqref{eq:whitening};
        \item[3:]  \emph{Sub-space extraction:} Compute the first $r$ eigenvectors of $\mathbf{\Phi}_t\mathbf{\Phi}_t^{\top}$ and aggregate them into $\tilde{\mathbf{U}}$.
    \end{enumerate}
\textbf{Output:} $\tilde{\mathbf{U}}$ used as the principal eigenspace of the unfolding matrix $\mathbf{X}$.
\end{algorithm}
\subsection{DTD Error Analysis} 
Based on the optimal sub-space detection designed in the preceding sub-section, we mathematically quantify the key feature of FlyCom$^2$ that the DTD error gracefully decreases with communication rounds. The existing error analysis for random sketching does not target distributed implementation and hence requires no communication links~\cite{randomprojection2011,StreamingTD}. The new challenge for the current analysis arises from the need to account for the distortion increased by the MIMO AirComp transmission. In what follows, we derive deterministic and probabilistic bounds on the DTD error defined in~\eqref{eq:error}. 

\subsubsection{Deterministic Error Bound}
As the unfolding matrix comprises $r$ principal components, its singular values can be represented as $\mathbf{\Sigma}_{\mathbf{X}} = \mathsf{diag}(\sigma_1,\sigma_2,\cdots,\sigma_I)$ with $\sigma_1=\cdots=\sigma_r\gg\sigma_{r+1}\geq\cdots\geq\sigma_I$, where we assume the same principal singular values following the literature (see, e.g.~\cite{StreamingTD}). 
\begin{Lemma}\label{Lemma:deviation}
    \emph{Consider the DTD of the unfolding matrix $\mathbf{X}$ in tensor decomposition that has an $r$-dimensional principal eigenspace $\mathbf{U}= [\mathbf{u}_1,\cdots,\mathbf{u}_r]$ and the singular values $\mathbf{\Sigma}_{\mathbf{X}}$. The estimation of $\mathbf{U}$ as in~\eqref{eq:MLestimate} yields the DTD error given as
    \begin{equation}\label{eq:rewrite_error}
        d(\tilde{\mathbf{U}},\mathbf{X})=\sum_{i=1}^r\sum_{j\geq r+1}(\sigma_i^2-\sigma_j^2)\langle\tilde{\mathbf{u}}_i,\mathbf{u}_j\rangle^2 + \sum_{i\geq r+1}\sigma_i^2. 
    \end{equation}
    }
\end{Lemma}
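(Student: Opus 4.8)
The plan is to unwind the definition $d(\tilde{\mathbf{U}},\mathbf{X}) = \Vert(\mathbf{I}_I - \tilde{\mathbf{U}}\tilde{\mathbf{U}}^{\top})\mathbf{X}\Vert_F^2$ from~\eqref{eq:error} into a trace and then substitute the SVD. Since $\mathbf{P}\overset{\triangle}{=}\mathbf{I}_I - \tilde{\mathbf{U}}\tilde{\mathbf{U}}^{\top}$ is an orthogonal projector (symmetric and idempotent, using that $\tilde{\mathbf{U}}$ has orthonormal columns because $\tilde{\mathbf{U}}^{\star}$ in~\eqref{eq:MLestimate} is built from eigenvectors of the symmetric matrix $\mathbf{\Phi}_t\mathbf{\Phi}_t^{\top}$), we have $\Vert\mathbf{P}\mathbf{X}\Vert_F^2 = \mathsf{Tr}(\mathbf{P}\mathbf{X}\mathbf{X}^{\top}\mathbf{P}) = \mathsf{Tr}(\mathbf{P}\mathbf{X}\mathbf{X}^{\top}) = \mathsf{Tr}(\mathbf{X}\mathbf{X}^{\top}) - \mathsf{Tr}(\tilde{\mathbf{U}}^{\top}\mathbf{X}\mathbf{X}^{\top}\tilde{\mathbf{U}})$. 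Plugging $\mathbf{X}=\mathbf{U}_{\mathbf{X}}\mathbf{\Sigma}_{\mathbf{X}}\mathbf{V}_{\mathbf{X}}^{\top}$ from~\eqref{eq:original_decomposition} gives $\mathbf{X}\mathbf{X}^{\top}=\sum_{j=1}^I\sigma_j^2\mathbf{u}_j\mathbf{u}_j^{\top}$, hence the first trace is $\sum_{j=1}^I\sigma_j^2$ and the second is $\sum_{i=1}^r\sum_{j=1}^I\sigma_j^2\langle\tilde{\mathbf{u}}_i,\mathbf{u}_j\rangle^2$.

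Next I would split both sums at the index $r$. Write $\sum_{j=1}^I\sigma_j^2=\sum_{j=1}^r\sigma_j^2+\sum_{j\geq r+1}\sigma_j^2$, and break the inner-product sum into its $j\leq r$ and $j\geq r+1$ blocks. The key identity is the completeness relation $\sum_{j=1}^I\langle\tilde{\mathbf{u}}_i,\mathbf{u}_j\rangle^2=\Vert\tilde{\mathbf{u}}_i\Vert^2=1$, valid because $\mathbf{U}_{\mathbf{X}}$ is a full orthogonal matrix and each $\tilde{\mathbf{u}}_i$ is a unit vector; this lets me replace $\sum_{j=1}^r\langle\tilde{\mathbf{u}}_i,\mathbf{u}_j\rangle^2$ by $1-\sum_{j\geq r+1}\langle\tilde{\mathbf{u}}_i,\mathbf{u}_j\rangle^2$. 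Here I invoke the standing assumption $\sigma_1=\cdots=\sigma_r$: for $j\leq r$ the weight $\sigma_j^2$ equals the common value $\sigma_1^2$, so the $j\leq r$ block of the second trace becomes $\sigma_1^2\sum_{i=1}^r\bigl(1-\sum_{j\geq r+1}\langle\tilde{\mathbf{u}}_i,\mathbf{u}_j\rangle^2\bigr)=r\sigma_1^2-\sigma_1^2\sum_{i=1}^r\sum_{j\geq r+1}\langle\tilde{\mathbf{u}}_i,\mathbf{u}_j\rangle^2$, and $\sum_{j=1}^r\sigma_j^2=r\sigma_1^2$ cancels the leading term.

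Collecting what remains leaves $d(\tilde{\mathbf{U}},\mathbf{X})=\sum_{j\geq r+1}\sigma_j^2+\sum_{i=1}^r\sum_{j\geq r+1}(\sigma_1^2-\sigma_j^2)\langle\tilde{\mathbf{u}}_i,\mathbf{u}_j\rangle^2$, and one last cosmetic step — using $\sigma_1^2=\sigma_i^2$ for every $i\leq r$ to rewrite the coefficient as $\sigma_i^2-\sigma_j^2$ inside the double sum — yields~\eqref{eq:rewrite_error}. I do not anticipate a real obstacle; the computation is routine once the trace identities are in place. The only delicate point is the combined use of the $j\leq r$ / $j\geq r+1$ index split, the completeness relation, and the equal-principal-singular-values hypothesis: without the latter the $j\leq r$ cross terms $\sum_{i=1}^r\sum_{j=1}^r(\sigma_1^2-\sigma_j^2)\langle\tilde{\mathbf{u}}_i,\mathbf{u}_j\rangle^2$ would not vanish and the clean two-term form of~\eqref{eq:rewrite_error} would not hold.
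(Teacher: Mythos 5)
Your proposal is correct and follows essentially the same route as the paper's Appendix B: expand $\Vert(\mathbf{I}_I-\tilde{\mathbf{U}}\tilde{\mathbf{U}}^{\top})\mathbf{X}\Vert_F^2$ via the SVD into $\sum_{j}\sigma_j^2-\sum_{i\leq r}\sum_{j}\sigma_j^2\langle\tilde{\mathbf{u}}_i,\mathbf{u}_j\rangle^2$, apply the completeness relation for the orthonormal basis $\{\mathbf{u}_j\}$, and invoke the equal-principal-singular-values assumption to eliminate the $j\leq r$ cross terms. The only difference is bookkeeping (you split the index set at $r$ directly, while the paper first isolates the $j=i$ term and then kills the remaining $j\leq r$, $j\neq i$ block), which does not change the argument.
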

\begin{proof}
    See Appendix~\ref{Apdx:deviation}.
\end{proof}
On the right hand side, the first term, $\sum_{i=1}^r\sum_{j\geq r+1}(\sigma_i^2-\sigma_j^2)\langle\tilde{\mathbf{u}}_i,\mathbf{u}_j\rangle^2$, represents the error due to random sketching; the second term $\sum_{i\geq r+1}\sigma_i^2$ represents the residual error due to non-zero non-principal components of $\mathbf{X}$. 

Next, we make an attempt to characterize the behavior of each error term, $(\sigma_i^2-\sigma_j^2)\langle\tilde{\mathbf{u}}_i,\mathbf{u}_j\rangle^2$. Let $\tilde{\mathbf{u}}_i$ and $\mathbf{u}_j$ denote the $i$-th and $j$-th ($i\leq r<j$) eigenvectors of the sample covariance matrix $\frac{1}{tM}\mathbf{\Phi}_t\mathbf{\Phi}_t^{\top}$
and the covariance matrix $\mathbf{U}_{\mathbf{X}}\mathbf{\Lambda}\mathbf{U}_{\mathbf{X}}^{\top}$, respectively. The error in~\eqref{eq:rewrite_error} is caused by the perturbation $\mathbf{\Delta}=\frac{1}{tM}\mathbf{\Phi}_t\mathbf{\Phi}_t^{\top}-\mathbf{U}_{\mathbf{X}}\mathbf{\Lambda}\mathbf{U}_{\mathbf{X}}^{\top}$. Using this fact allows us to obtain the following desired result.

\begin{Lemma}\label{Lemma:UPofEachTerm}
    \emph{Consider a fixed realization $\mathbf{W}$ in the DRM, $\mathbf{\Phi}_t$, in~\eqref{eq:whitening} and the error term, $(\sigma_i^2-\sigma_j^2)\langle\tilde{\mathbf{u}}_i,\mathbf{u}_j\rangle^2$, in Lemma~\ref{Lemma:deviation} is upper bounded as
    \begin{equation*}
        (\sigma_i^2-\sigma_j^2)\langle\tilde{\mathbf{u}}_i,\mathbf{u}_j\rangle^2\leq \max \left\{4,\delta_{ij}^2\right\}\frac{\Vert\mathbf{\Delta}\mathbf{u}_j\Vert_2^2}{\sigma_i^2-\sigma_j^2},\quad i\leq r<j,
    \end{equation*}
    where $\delta_{ij} \overset{\triangle}{=} \frac{\min\{2|\tilde{\lambda}_i-\lambda_i|,(\sigma_i^2-\sigma_j^2)\}}{|\tilde{\lambda}_i-\lambda_j|}$ with $\lambda_i$ and $\tilde{\lambda}_i$ being the $i$-th eigenvalues of $\mathbf{U}_{\mathbf{X}}\mathbf{\Lambda}\mathbf{U}_{\mathbf{X}}^{\top}$ and $\frac{1}{tM}\mathbf{\Phi}_t\mathbf{\Phi}_t^{\top}$, respectively. }
\end{Lemma}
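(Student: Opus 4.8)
The plan is to treat this as a one-vector-at-a-time eigenvector perturbation bound, in the spirit of the Davis--Kahan $\sin\theta$ theorem but calibrated so the constants match the stated $\delta_{ij}$. First I would record the structure of the two matrices. By Lemma~\ref{Lemma:GaussianMatrix} and~\eqref{eq:effectivecovariance}, the population matrix is $\mathbf{U}_{\mathbf{X}}\mathbf{\Lambda}\mathbf{U}_{\mathbf{X}}^{\top}=\mathbf{X}\mathbf{X}^{\top}+c\,\mathbf{I}_I$ with the scalar $c=\frac{1}{2tM}\sigma^2\sum_{\ell\leq t}\mathsf{Tr}(\mathbf{A}_\ell^H\mathbf{A}_\ell)\geq 0$, so its eigenvalues are $\lambda_i=\sigma_i^2+c$; in particular $\lambda_i-\lambda_j=\sigma_i^2-\sigma_j^2>0$ for $i\leq r<j$ under the eigen-gap assumption. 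The sample matrix is $\frac{1}{tM}\mathbf{\Phi}_t\mathbf{\Phi}_t^{\top}=\mathbf{U}_{\mathbf{X}}\mathbf{\Lambda}\mathbf{U}_{\mathbf{X}}^{\top}+\mathbf{\Delta}$, where $\mathbf{\Delta}$ is symmetric (a difference of symmetric matrices). Here $\tilde{\mathbf{u}}_i$ is a unit eigenvector of the sample matrix with eigenvalue $\tilde\lambda_i$, and $\mathbf{u}_j$ a unit eigenvector of the population matrix with eigenvalue $\lambda_j$.

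The core step is an exact identity for the overlap. Left-multiplying the perturbed eigen-equation $(\mathbf{U}_{\mathbf{X}}\mathbf{\Lambda}\mathbf{U}_{\mathbf{X}}^{\top}+\mathbf{\Delta})\tilde{\mathbf{u}}_i=\tilde\lambda_i\tilde{\mathbf{u}}_i$ by $\mathbf{u}_j^{\top}$, using that $\mathbf{u}_j$ is an \emph{exact} eigenvector ($\mathbf{u}_j^{\top}\mathbf{U}_{\mathbf{X}}\mathbf{\Lambda}\mathbf{U}_{\mathbf{X}}^{\top}=\lambda_j\mathbf{u}_j^{\top}$) and the symmetry of $\mathbf{\Delta}$, gives
\begin{equation*}
  (\tilde\lambda_i-\lambda_j)\,\langle\mathbf{u}_j,\tilde{\mathbf{u}}_i\rangle=\mathbf{u}_j^{\top}\mathbf{\Delta}\tilde{\mathbf{u}}_i=\langle\mathbf{\Delta}\mathbf{u}_j,\tilde{\mathbf{u}}_i\rangle.
\end{equation*}
Since $\Vert\tilde{\mathbf{u}}_i\Vert_2=1$, Cauchy--Schwarz yields $\langle\tilde{\mathbf{u}}_i,\mathbf{u}_j\rangle^2\leq \Vert\mathbf{\Delta}\mathbf{u}_j\Vert_2^2/(\tilde\lambda_i-\lambda_j)^2$ whenever $\tilde\lambda_i\neq\lambda_j$ (which is implicit in the statement, $\delta_{ij}$ being defined through $|\tilde\lambda_i-\lambda_j|$). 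Multiplying by $\sigma_i^2-\sigma_j^2>0$ and regrouping,
\begin{equation*}
  (\sigma_i^2-\sigma_j^2)\langle\tilde{\mathbf{u}}_i,\mathbf{u}_j\rangle^2\ \leq\ \frac{(\sigma_i^2-\sigma_j^2)^2}{(\tilde\lambda_i-\lambda_j)^2}\cdot\frac{\Vert\mathbf{\Delta}\mathbf{u}_j\Vert_2^2}{\sigma_i^2-\sigma_j^2},
\end{equation*}
so it only remains to prove the scalar inequality $(\sigma_i^2-\sigma_j^2)^2/(\tilde\lambda_i-\lambda_j)^2\leq\max\{4,\delta_{ij}^2\}$.

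For that scalar inequality I would split on whether the eigenvalue perturbation at index $i$ is small relative to the gap. Write $\sigma_i^2-\sigma_j^2=\lambda_i-\lambda_j=(\lambda_i-\tilde\lambda_i)+(\tilde\lambda_i-\lambda_j)$. If $2|\tilde\lambda_i-\lambda_i|<\sigma_i^2-\sigma_j^2$, then $\tilde\lambda_i-\lambda_j=(\sigma_i^2-\sigma_j^2)-(\lambda_i-\tilde\lambda_i)>\frac12(\sigma_i^2-\sigma_j^2)>0$, hence $(\sigma_i^2-\sigma_j^2)^2/(\tilde\lambda_i-\lambda_j)^2<4$. Otherwise $2|\tilde\lambda_i-\lambda_i|\geq\sigma_i^2-\sigma_j^2$, so $\min\{2|\tilde\lambda_i-\lambda_i|,\sigma_i^2-\sigma_j^2\}=\sigma_i^2-\sigma_j^2$ and $\delta_{ij}=(\sigma_i^2-\sigma_j^2)/|\tilde\lambda_i-\lambda_j|$ exactly, so the ratio equals $\delta_{ij}^2$. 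In both cases the ratio is at most $\max\{4,\delta_{ij}^2\}$, which finishes the proof.

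The eigen-equation manipulation and Cauchy--Schwarz are routine; the part that needs care is calibrating the case split so that the bound reproduces exactly the stated $\delta_{ij}$ with its inner $\min$, rather than a cruder generic bound of the form $\Vert\mathbf{\Delta}\mathbf{u}_j\Vert_2^2/\text{gap}^2$. A minor point to check is the sign of $\tilde\lambda_i-\lambda_j$ when the perturbation is large: the small-perturbation case already forces $\tilde\lambda_i-\lambda_j>0$, while the large-perturbation case uses only $|\tilde\lambda_i-\lambda_j|$, so no sign issue arises; and the degenerate coincidence $\tilde\lambda_i=\lambda_j$ is excluded by the hypothesis, since under $\sigma_1=\cdots=\sigma_r\gg\sigma_{r+1}$ and a controlled $\mathbf{\Delta}$ the eigenvalue $\tilde\lambda_i$ stays near $\lambda_i\gg\lambda_j$.
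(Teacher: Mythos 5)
Your proof is correct, but it takes a genuinely different route from the paper. The paper invokes the eigenvector perturbation bounds of the cited reference (Theorems 3.1 and 3.2 of the Loukas paper), which directly give $\langle\tilde{\mathbf{u}}_i,\mathbf{u}_j\rangle\leq \min\bigl\{\max\bigl\{2,2\tfrac{|\tilde{\lambda}_i-\lambda_i|}{|\tilde{\lambda}_i-\lambda_j|}\bigr\},\tfrac{\lambda_i-\lambda_j}{|\tilde{\lambda}_i-\lambda_j|}\bigr\}\tfrac{\Vert\mathbf{\Delta}\mathbf{u}_j\Vert_2}{\lambda_i-\lambda_j}$, and then massages this with the identity $\min\{\max\{a,b\},c\}\leq\max\{a,\min\{b,c\}\}$ and $\lambda_i-\lambda_j=\sigma_i^2-\sigma_j^2$ to reach the stated $\max\{4,\delta_{ij}^2\}$ form. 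You instead prove the key estimate from scratch: the eigen-equation identity $(\tilde\lambda_i-\lambda_j)\langle\mathbf{u}_j,\tilde{\mathbf{u}}_i\rangle=\mathbf{u}_j^{\top}\mathbf{\Delta}\tilde{\mathbf{u}}_i$ plus Cauchy--Schwarz yields $|\langle\tilde{\mathbf{u}}_i,\mathbf{u}_j\rangle|\leq\Vert\mathbf{\Delta}\mathbf{u}_j\Vert_2/|\tilde\lambda_i-\lambda_j|$, which is exactly the sharper branch of the cited bound, and your two-case scalar argument (splitting on whether $2|\tilde\lambda_i-\lambda_i|$ is below or above the gap $\sigma_i^2-\sigma_j^2$) correctly establishes $(\sigma_i^2-\sigma_j^2)^2/(\tilde\lambda_i-\lambda_j)^2\leq\max\{4,\delta_{ij}^2\}$, so the lemma follows; both cases check out, and the degenerate case $\tilde\lambda_i=\lambda_j$ is moot since $\delta_{ij}$ is itself defined through $|\tilde\lambda_i-\lambda_j|$. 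What your route buys is self-containedness (no external perturbation theorem) and in fact a slightly sharper intermediate bound $\Vert\mathbf{\Delta}\mathbf{u}_j\Vert_2^2/(\tilde\lambda_i-\lambda_j)^2$, of which the stated bound is a relaxation; what the paper's route buys is brevity, since the cited theorem already packages both branches of the min and the proof reduces to bookkeeping with $\lambda_i-\lambda_j=\sigma_i^2-\sigma_j^2$.
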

\begin{proof}
    See Appendix~\ref{Apdx:deterministicerror}.
\end{proof}

The upper bound in Lemma~\ref{Lemma:UPofEachTerm} suggests two scaling regions of the DTD error, namely $\delta_{ij} \geq 2$ and $\delta_{ij}<2$. Invoking the well-known Weyl's theorem (see, e.g.~\cite{WeylsTheorem}), the norm of the perturbation $\mathbf{\Delta}$ and hence the value of $\delta_{ij}$ reduce as FlyCom$^2$ progresses in time. This is aligned with the result in Fig.~\ref{subfig:eigenvalues} where the average value of $\{\delta_{ij}\}$ is observed to decrease with increasing communication time $t$. To simplify the analysis, we focus on the case of $\delta_{ij}\leq 2$, $\forall i\leq r<j$, by assuming sufficiently large $t$. In this case, the upper bound in Lemma~\ref{Lemma:UPofEachTerm}  simplifies to
\begin{equation}\label{eq:simplifiedUP}
    (\sigma_i^2-\sigma_j^2)\langle\tilde{\mathbf{u}}_i,\mathbf{u}_j\rangle^2\leq \frac{4\Vert\mathbf{\Delta}\mathbf{u}_j\Vert_2^2}{\sigma_i^2-\sigma_j^2},\quad i\leq r<j.
\end{equation}

Next, based on Lemma~\ref{Lemma:deviation} and~\eqref{eq:simplifiedUP}, the desired deterministic error bound is derived as follows.
\begin{Theorem}[Expected Error Bound]\label{Theorem:expectation}
    \emph{Given the receive beamformers $\{\mathbf{A}_{\ell}\}_{\ell\leq t}$ of FlyCom$^2$-based DTD and $\delta_{ij}\leq 2$, $\forall i\leq r<j$, the expected error can be bounded as
    \begin{equation*}
        \mathsf{E}[d(\tilde{\mathbf{U}},\mathbf{X})]\leq \frac{4}{tM}\sum_{i=1}^r\sum_{j\geq r+1}\frac{\lambda_j^2+\lambda_j\mathsf{Tr}(\mathbf{\Lambda})}{\sigma_i^2-\sigma_j^2} + \sum_{i\geq r+1}\sigma_i^2, 
    \end{equation*}
    where $\delta_{ij}$ and $\lambda_j$ follow those in Lemma~\ref{Lemma:UPofEachTerm}.}
\end{Theorem}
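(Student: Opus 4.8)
The plan is to take expectations on both sides of the deterministic per-term bound already in hand, reduce the whole theorem to a single second-moment computation $\mathsf{E}\big[\Vert\mathbf{\Delta}\mathbf{u}_j\Vert_2^2\big]$, and evaluate that moment explicitly using the Gaussian structure $\mathbf{\Phi}_t=\mathbf{U}_{\mathbf{X}}\mathbf{\Lambda}^{\frac{1}{2}}\mathbf{W}$ from~\eqref{eq:whitening}. Concretely, Lemma~\ref{Lemma:deviation} writes $d(\tilde{\mathbf{U}},\mathbf{X})$ as the sketching error $\sum_{i=1}^r\sum_{j\ge r+1}(\sigma_i^2-\sigma_j^2)\langle\tilde{\mathbf{u}}_i,\mathbf{u}_j\rangle^2$ plus the deterministic residual $\sum_{i\ge r+1}\sigma_i^2$. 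Since the receive beamformers are given, $\mathbf{\Lambda}$ (hence all $\lambda_i$ and $\mathsf{Tr}(\mathbf{\Lambda})$) and the singular values $\{\sigma_i\}$ are deterministic, and the only randomness left is $\mathbf{W}$. Under the hypothesis $\delta_{ij}\le 2$, the bound~\eqref{eq:simplifiedUP} holds pathwise, i.e. for every realization of $\mathbf{W}$, so I may take expectations term by term to obtain $\mathsf{E}[d(\tilde{\mathbf{U}},\mathbf{X})]\le \sum_{i=1}^r\sum_{j\ge r+1}\frac{4\,\mathsf{E}[\Vert\mathbf{\Delta}\mathbf{u}_j\Vert_2^2]}{\sigma_i^2-\sigma_j^2}+\sum_{i\ge r+1}\sigma_i^2$. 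It then remains to show $\mathsf{E}[\Vert\mathbf{\Delta}\mathbf{u}_j\Vert_2^2]=\tfrac{1}{tM}\big(\lambda_j^2+\lambda_j\mathsf{Tr}(\mathbf{\Lambda})\big)$.

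For that moment, I would exploit that $\mathbf{u}_j$ is the $j$-th column of the orthogonal $\mathbf{U}_{\mathbf{X}}$, so $\mathbf{U}_{\mathbf{X}}^{\top}\mathbf{u}_j=\mathbf{e}_j$ and $\mathbf{U}_{\mathbf{X}}\mathbf{\Lambda}\mathbf{U}_{\mathbf{X}}^{\top}\mathbf{u}_j=\lambda_j\mathbf{u}_j$, while $\frac{1}{tM}\mathbf{\Phi}_t\mathbf{\Phi}_t^{\top}\mathbf{u}_j=\frac{\sqrt{\lambda_j}}{tM}\mathbf{U}_{\mathbf{X}}\mathbf{\Lambda}^{\frac{1}{2}}\mathbf{W}\mathbf{W}^{\top}\mathbf{e}_j$. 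Subtracting gives the clean form $\mathbf{\Delta}\mathbf{u}_j=\frac{\sqrt{\lambda_j}}{tM}\,\mathbf{U}_{\mathbf{X}}\mathbf{\Lambda}^{\frac{1}{2}}\mathbf{g}$ with $\mathbf{g}\overset{\triangle}{=}\mathbf{W}\mathbf{W}^{\top}\mathbf{e}_j-tM\,\mathbf{e}_j$, and orthogonality of $\mathbf{U}_{\mathbf{X}}$ yields $\Vert\mathbf{\Delta}\mathbf{u}_j\Vert_2^2=\frac{\lambda_j}{(tM)^2}\,\mathbf{g}^{\top}\mathbf{\Lambda}\mathbf{g}=\frac{\lambda_j}{(tM)^2}\sum_{i=1}^I\lambda_i g_i^2$. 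The structural point that makes the theorem tractable is exactly this: what enters is the perturbation applied to a single eigenvector $\mathbf{u}_j$, not the operator norm of $\mathbf{\Delta}$, and that projection is an exact Gaussian bilinear form rather than something requiring a concentration estimate.

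The remaining step is routine fourth-moment bookkeeping. Writing $\mathbf{w}_i^{\top}$ for the $i$-th row of $\mathbf{W}$, one has $g_i=\mathbf{w}_i^{\top}\mathbf{w}_j$ for $i\ne j$ and $g_j=\Vert\mathbf{w}_j\Vert_2^2-tM$. For $i\ne j$, conditioning on $\mathbf{w}_j$ gives $\mathsf{E}[g_i^2]=\mathsf{E}[\Vert\mathbf{w}_j\Vert_2^2]=tM$; for $i=j$, $\mathsf{E}[g_j^2]=\mathsf{Var}(\Vert\mathbf{w}_j\Vert_2^2)=tM\cdot\mathsf{Var}(W_{11}^2)=2tM$ using $\mathsf{E}[W_{11}^4]=3$. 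Hence $\sum_{i=1}^I\lambda_i\mathsf{E}[g_i^2]=tM\big(\lambda_j+\mathsf{Tr}(\mathbf{\Lambda})\big)$, so $\mathsf{E}[\Vert\mathbf{\Delta}\mathbf{u}_j\Vert_2^2]=\frac{\lambda_j(\lambda_j+\mathsf{Tr}(\mathbf{\Lambda}))}{tM}$, and substituting back into the term-by-term bound gives the stated inequality. I expect the only delicate point to be keeping the diagonal $i=j$ contribution separate: it carries the extra factor of two from $\mathsf{Var}(\chi_1^2)=2$ and is precisely what produces the $\lambda_j^2$ summand; besides that, one just needs the earlier remark that~\eqref{eq:simplifiedUP} is a pathwise bound so that pulling $\mathsf{E}[\cdot]$ inside the double sum is legitimate.
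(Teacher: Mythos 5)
Your proposal is correct and follows essentially the same route as the paper's proof: take expectations of the pathwise per-term bound in~\eqref{eq:simplifiedUP} and evaluate $\mathsf{E}[\Vert\mathbf{\Delta}\mathbf{u}_j\Vert_2^2]$ exactly from the Gaussian structure of $\mathbf{\Phi}_t$, arriving at the same value $\frac{\lambda_j^2+\lambda_j\mathsf{Tr}(\mathbf{\Lambda})}{tM}$. Your centered quadratic form $\mathbf{g}=\mathbf{W}\mathbf{W}^{\top}\mathbf{e}_j-tM\mathbf{e}_j$ with the chi-squared-variance bookkeeping is just a cleaner packaging of the paper's three-term expansion and fourth-moment computation of $\mathbf{e}_j\mathsf{E}[\mathbf{W}\mathbf{W}^{\top}\mathbf{\Lambda}\mathbf{W}\mathbf{W}^{\top}]\mathbf{e}_j^{\top}$.
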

\begin{proof}
    See Appendix~\ref{Apdx:expectation}.
\end{proof}
\begin{figure*}[t]
    \centering
    \subfigure[Average value of $\{\delta_{ij}\}$ versus communication time]{\label{subfig:eigenvalues}\includegraphics[width=0.44\textwidth]{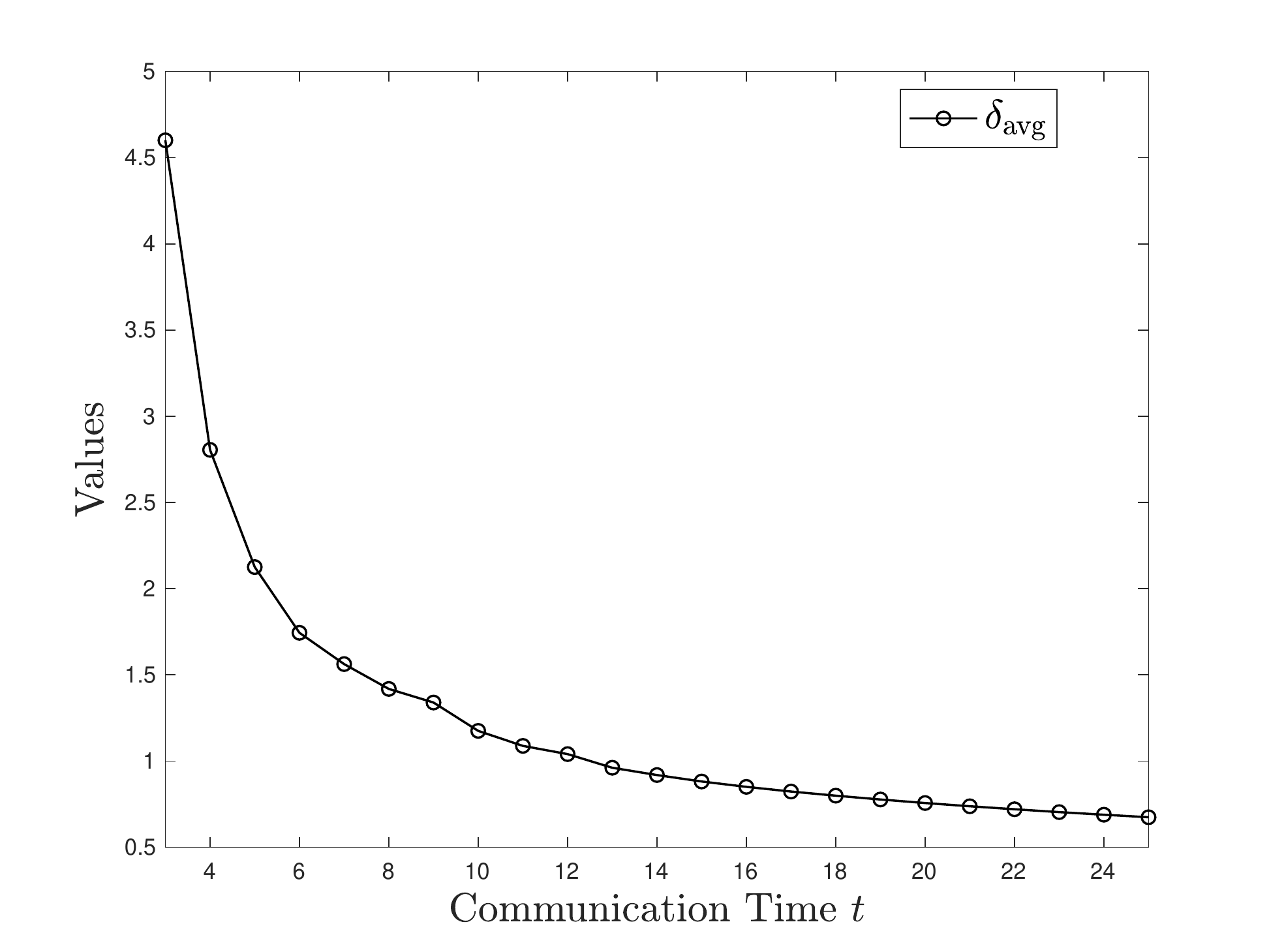}}
    \subfigure[Expected DTD error versus communication time]{\label{subfig:expectederror}\includegraphics[width=0.44\textwidth]{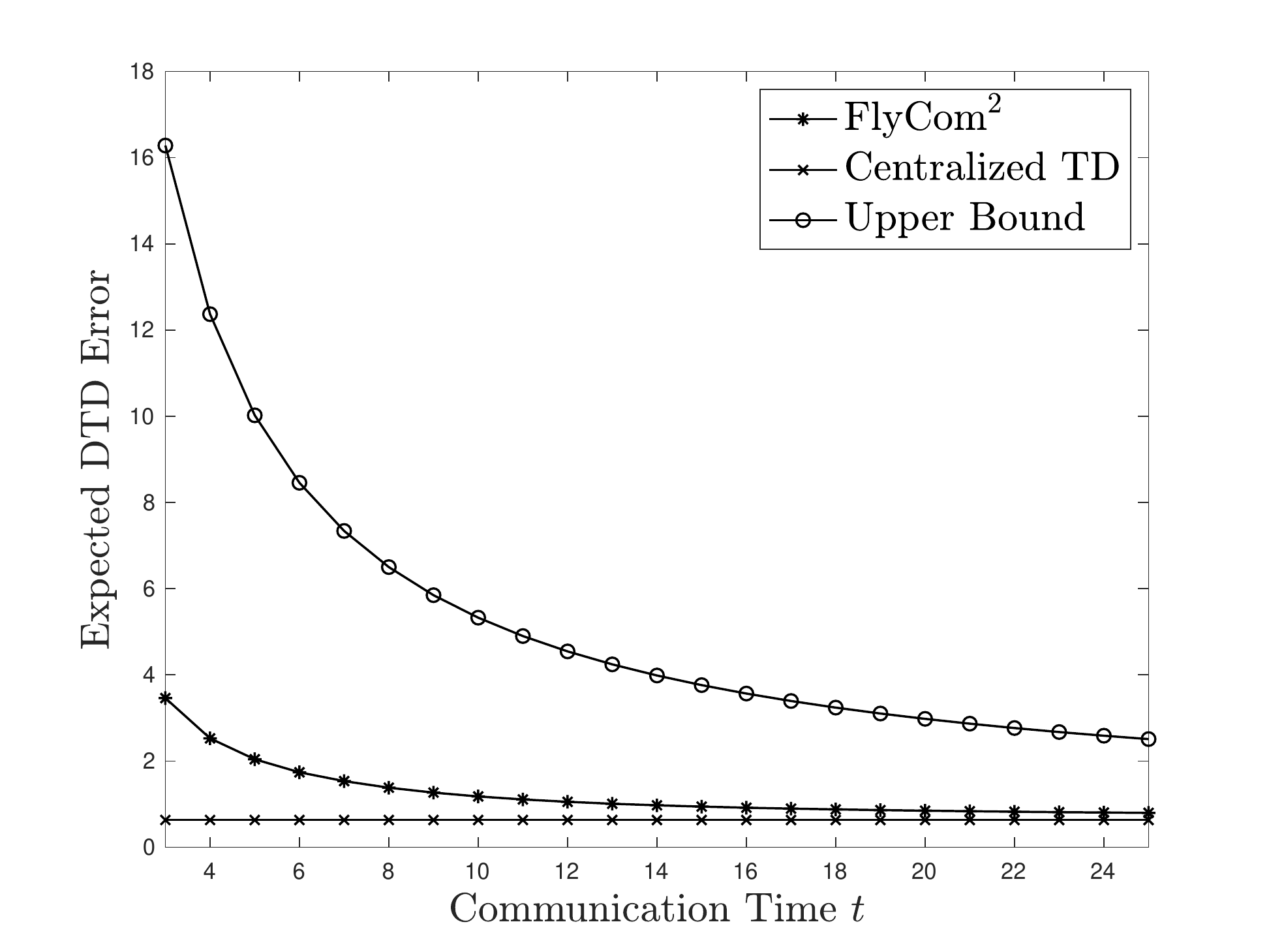}}
    \caption{Validation of theoretical results under the settings of $r = 12$, $I=100$, $\mathbf{\Sigma}_{\mathbf{X}} = \mathsf{diag}(1,\cdots,1,\frac{1}{2},\frac{1}{3},\cdots,\frac{1}{88})$, and $\mathbf{A}_t\mathbf{A}_t^{\top} =\frac{1}{10\sigma}\mathbf{I}$}
    \label{fig:validation}
\end{figure*}

The error bound in the Theorem~\ref{Theorem:expectation} is compared numerically with the exact error and that of centralized tensor decomposition in Fig.~\ref{subfig:expectederror}. One can observe the bound to capture the trend of decreasing DTD error as  $t$ progresses. In particular, it shows that under a small perturbation,
\begin{equation*}
    \mathsf{E}[d(\tilde{\mathbf{U}},\mathbf{X})]\propto \frac{1}{tM}.
\end{equation*}

\subsubsection{Probabilistic Error Bound}
We derive in the sequel a probabilistic bound on the DTD error using the method of \emph{concentration of measure}. A relevant useful result is given below.

\begin{Lemma}[McDiarmid's Inequality~\cite{McDiarmid}]\label{Lemma:McDiarmid}
    \emph{Let $g$ be a positive function on independent variables $\{W_m\}$ satisfying the bounded difference property:
    \begin{align*}
        \sup_{\{W_{M}\}_{M\neq m},W_m,W_M}|&g(\{W_{M}\}_{M\neq m},W_m) \\
        &- g(\{W_{M}\}_{M\neq m},W_M)|\leq c_m,\ \forall m,
    \end{align*}
    with constants $\{c_m\}$ and $W_M$ being i.i.d. as $W_m$. Then, for any $\epsilon>0$,
    \begin{equation*}
        \mathsf{Pr}\left[g(\{W_m\})-\mathsf{E}[g(\{W_m\})]\geq\epsilon\right]\leq \exp\left(-\frac{2\epsilon^2}{\sum_mc_m^2}\right).
    \end{equation*}}
\end{Lemma}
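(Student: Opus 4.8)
The plan is to prove this classical concentration inequality via the standard Doob-martingale argument combined with a Chernoff-type exponential-moment bound; McDiarmid's inequality is a direct consequence of the Azuma--Hoeffding inequality applied to the martingale obtained by revealing the independent variables $\{W_m\}$ one coordinate at a time. Concretely, I would index the variables as $W_1,\dots,W_n$ and build the Doob martingale $V_m = \mathsf{E}[g(\{W_\ell\})\mid W_1,\dots,W_m]$ for $m=0,1,\dots,n$, so that $V_0 = \mathsf{E}[g(\{W_m\})]$ and $V_n = g(\{W_m\})$. Telescoping then gives $g(\{W_m\}) - \mathsf{E}[g(\{W_m\})] = \sum_{m=1}^n D_m$ with increments $D_m = V_m - V_{m-1}$, and the task reduces to controlling the $D_m$.

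Next I would show that, conditioned on $W_1,\dots,W_{m-1}$, the increment $D_m$ has zero mean and lies in an interval of length at most $c_m$. Fixing $W_1,\dots,W_{m-1}$, define $\phi(w) = \mathsf{E}[g(W_1,\dots,W_{m-1},w,W_{m+1},\dots,W_n)]$, where the expectation is taken over $W_{m+1},\dots,W_n$; independence of the variables is used here so that $\phi$ depends only on $w$ and the frozen past. Then $V_m = \phi(W_m)$ and $V_{m-1} = \mathsf{E}_{W_m}[\phi(W_m)]$, whence $D_m = \phi(W_m) - \mathsf{E}[\phi(W_m)]$; the bounded-difference hypothesis on $g$, together with linearity of expectation, yields $\sup_{w,w'}|\phi(w)-\phi(w')|\leq c_m$, giving the claimed conditional range and zero conditional mean.

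I would then invoke Hoeffding's lemma: a zero-mean random variable supported in an interval of length $c_m$ satisfies $\mathsf{E}[e^{sD_m}\mid W_1,\dots,W_{m-1}]\leq \exp(s^2c_m^2/8)$ for all $s>0$. Iterating this bound via the tower property — conditioning successively on $W_1,\dots,W_{n-1}$, then on $W_1,\dots,W_{n-2}$, and so on — gives $\mathsf{E}[\exp(s(g(\{W_m\})-\mathsf{E}[g(\{W_m\})]))]\leq \exp(s^2\sum_m c_m^2/8)$. Markov's inequality then yields $\mathsf{Pr}[g-\mathsf{E}[g]\geq\epsilon]\leq \exp(-s\epsilon + s^2\sum_m c_m^2/8)$ for every $s>0$, and optimizing with the choice $s = 4\epsilon/\sum_m c_m^2$ produces the stated bound $\exp(-2\epsilon^2/\sum_m c_m^2)$.

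The main obstacle is the range-bound step: translating the bounded-difference property of $g$ into a uniform bound on the martingale increment $D_m$, since both $V_m$ and $V_{m-1}$ are themselves conditional expectations and the partial averaging over the ``future'' coordinates $W_{m+1},\dots,W_n$ must be handled carefully — this is exactly where independence of the $\{W_m\}$ is essential. Once that translation is secured, Hoeffding's lemma, the tower-property iteration, and the Chernoff optimization are routine. Since the statement is quoted verbatim from \cite{McDiarmid}, an acceptable alternative is simply to cite it as a known result.
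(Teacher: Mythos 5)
Your proposal is correct: the Doob-martingale construction $V_m=\mathsf{E}[g\mid W_1,\dots,W_m]$, the translation of the bounded-difference property into a conditional range bound $|D_m|$-interval of length $c_m$ (which is exactly where independence enters, as you note), Hoeffding's lemma, the tower-property iteration of the moment-generating-function bound, and the Chernoff optimization at $s=4\epsilon/\sum_m c_m^2$ do yield the stated exponent $-2\epsilon^2/\sum_m c_m^2$. The only point of comparison with the paper is that the paper offers no proof at all for this lemma: it is quoted verbatim as a known result with a citation to \cite{McDiarmid} and is used only as a tool in the proof of Theorem~\ref{Theorem:probability}, where the relevant work is verifying the bounded-difference property for the specific function $g(\mathbf{W})$ there (conditioned on the truncation event). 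So your last remark is the operative one for this paper --- citing the result suffices --- but your reconstruction of the standard argument is complete in outline and contains no gaps; if you wanted to write it out fully you would only need to spell out Hoeffding's lemma itself (the usual convexity/exponential-tilting computation), which you have correctly identified as the remaining routine step.
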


Using Lemma~\ref{Lemma:McDiarmid}, the desired result is obtained as shown below.
\begin{Theorem}[Probabilistic Error Bound]\label{Theorem:probability}
    \emph{Given receive beamformers $\{\mathbf{A}_{\ell}\}_{\ell\leq t}$ and $\delta_{ij}\leq 2$, $\forall i\leq r<j$, for any $\epsilon\geq 0$, the error of the FlyCom$^2$-baded DTD can be upper bounded as
    \begin{equation*}
        d(\tilde{\mathbf{U}},\mathbf{X})\leq  \frac{4(1 + \epsilon)}{tM}\sum_{i=1}^r\sum_{j\geq r+1}\frac{\lambda_j^2+\lambda_j\mathsf{Tr}(\mathbf{\Lambda})}{\sigma_i^2-\sigma_j^2} + \sum_{i\geq r+1}\sigma_i^2,
    \end{equation*}
    with the probability of at least $\left[1-e^{-\frac{\epsilon^2}{2\kappa^8}}\right]\mathsf{erf}\left(\frac{\kappa}{\sqrt{2}}\right)^{tM(I-r)}$, where $\mathsf{erf}(\cdot)$ denotes the error function defined as $\mathsf{erf}(y) = \frac{2}{\sqrt{\pi}}\int_{0}^y e^{-x^2}\mathrm{d}x$ and $\kappa\geq 1$.}
\end{Theorem}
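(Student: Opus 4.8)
The plan is to bound the DTD error by an explicit polynomial in the Gaussian sketch $\mathbf{W}$, read off its mean from the computation already made for Theorem~\ref{Theorem:expectation}, and then establish a one-sided concentration of that polynomial around its mean via McDiarmid's inequality (Lemma~\ref{Lemma:McDiarmid}) after truncating $\mathbf{W}$.

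\textbf{Step 1 (reduction to a polynomial in $\mathbf{W}$).} Combining Lemma~\ref{Lemma:deviation} with the simplified per-term bound~\eqref{eq:simplifiedUP}, which is available because the hypothesis $\delta_{ij}\le2$ is assumed, gives $d(\tilde{\mathbf{U}},\mathbf{X})\le g(\mathbf{W})+\sum_{i\ge r+1}\sigma_i^2$, where $g(\mathbf{W}):=\sum_{i=1}^{r}\sum_{j\ge r+1}\frac{4\Vert\mathbf{\Delta}\mathbf{u}_j\Vert_2^2}{\sigma_i^2-\sigma_j^2}$ and $\mathbf{\Delta}=\frac{1}{tM}\mathbf{\Phi}_t\mathbf{\Phi}_t^{\top}-\mathbf{U}_{\mathbf{X}}\mathbf{\Lambda}\mathbf{U}_{\mathbf{X}}^{\top}$. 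Substituting $\mathbf{\Phi}_t=\mathbf{U}_{\mathbf{X}}\mathbf{\Lambda}^{1/2}\mathbf{W}$ from~\eqref{eq:whitening} and using $\mathbf{U}_{\mathbf{X}}^{\top}\mathbf{u}_j=\mathbf{e}_j$ (the $j$-th coordinate vector) gives $\Vert\mathbf{\Delta}\mathbf{u}_j\Vert_2^2=\lambda_j\Vert\mathbf{\Lambda}^{1/2}(\frac{1}{tM}\mathbf{W}\mathbf{W}^{\top}-\mathbf{I})\mathbf{e}_j\Vert_2^2$, i.e. a quadratic form in the quantities $\frac{1}{tM}\langle\mathbf{w}_a,\mathbf{w}_j\rangle$ (for $a\ne j$) and $\frac{1}{tM}\Vert\mathbf{w}_j\Vert_2^2-1$, where $\mathbf{w}_a$ denotes the $a$-th row of $\mathbf{W}$; hence $g$ is an explicit degree-four polynomial in the i.i.d. $\mathcal{N}(0,1)$ entries of $\mathbf{W}$.

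\textbf{Step 2 (mean and truncation).} The per-summand mean $\mathsf{E}[\Vert\mathbf{\Delta}\mathbf{u}_j\Vert_2^2]=\frac{\lambda_j^2+\lambda_j\mathsf{Tr}(\mathbf{\Lambda})}{tM}$ is exactly the moment computed inside the proof of Theorem~\ref{Theorem:expectation}, so $\mathsf{E}[g(\mathbf{W})]=\frac{4}{tM}\sum_{i=1}^{r}\sum_{j\ge r+1}\frac{\lambda_j^2+\lambda_j\mathsf{Tr}(\mathbf{\Lambda})}{\sigma_i^2-\sigma_j^2}$, which is the $\epsilon=0$ case of the claimed bound; it remains to show $g(\mathbf{W})\le(1+\epsilon)\mathsf{E}[g(\mathbf{W})]$ with the stated probability. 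Because $g$ is a polynomial in unbounded Gaussians it has no global bounded-difference constants, so Lemma~\ref{Lemma:McDiarmid} cannot be applied to it directly. I would therefore condition on the event $\mathcal{E}_\kappa$ that all $tM(I-r)$ relevant entries of $\mathbf{W}$ lie in $[-\kappa,\kappa]$; by independence $\mathsf{Pr}[\mathcal{E}_\kappa]=\mathsf{erf}(\kappa/\sqrt{2})^{tM(I-r)}$, which is precisely the factor appearing in the theorem.

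\textbf{Step 3 (bounded differences and McDiarmid).} On $\mathcal{E}_\kappa$, replacing a single entry $W_{am}$ by another value in $[-\kappa,\kappa]$ perturbs each inner product $\langle\mathbf{w}_a,\mathbf{w}_j\rangle$ and each squared norm $\Vert\mathbf{w}_j\Vert_2^2$ by $O(\kappa^2)$, while these quantities are of size $O(tM\kappa^2)$; differentiating the squares then bounds the induced change in $g$ by a constant $c_{am}$ of order $\frac{\kappa^4}{tM}$ times the relevant sum of spectral weights. Summing $c_{am}^2$ over all $tM(I-r)$ coordinates and using $\mathsf{Tr}(\mathbf{\Lambda})$ together with the structure of $\mathsf{E}[g]$ collapses the total to a bound of the form $\sum_{a,m}c_{am}^2\le4\kappa^8(\mathsf{E}[g(\mathbf{W})])^2$. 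Feeding this and the deviation level $\epsilon\,\mathsf{E}[g(\mathbf{W})]$ into Lemma~\ref{Lemma:McDiarmid} yields $\mathsf{Pr}[g(\mathbf{W})-\mathsf{E}[g(\mathbf{W})\mid\mathcal{E}_\kappa]\ge\epsilon\,\mathsf{E}[g(\mathbf{W})]\mid\mathcal{E}_\kappa]\le e^{-\epsilon^2/(2\kappa^8)}$; since for $\kappa\ge1$ truncation does not inflate the conditional mean, $\mathsf{E}[g(\mathbf{W})\mid\mathcal{E}_\kappa]\le\mathsf{E}[g(\mathbf{W})]$, so $\mathsf{Pr}[g(\mathbf{W})\le(1+\epsilon)\mathsf{E}[g(\mathbf{W})]\mid\mathcal{E}_\kappa]\ge1-e^{-\epsilon^2/(2\kappa^8)}$. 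Multiplying by $\mathsf{Pr}[\mathcal{E}_\kappa]$ and combining with the reduction of Step~1 then gives the theorem.

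\textbf{Anticipated main obstacle.} The crux is the bookkeeping in Step~3: obtaining clean per-coordinate bounded differences for the degree-four polynomial $g$ and, above all, organizing $\sum_{a,m}c_{am}^2$ so that it telescopes to exactly a $\kappa^8$-multiple of $(\mathsf{E}[g(\mathbf{W})])^2$, which is what makes the \emph{relative} deviation $\epsilon$ enter cleanly in the exponent $e^{-\epsilon^2/(2\kappa^8)}$. A secondary delicate point is verifying that conditioning on $\mathcal{E}_\kappa$ does not increase $\mathsf{E}[g]$ — the place where the hypothesis $\kappa\ge1$ is used; the remainder is routine moment computation.
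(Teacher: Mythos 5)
Your proposal takes essentially the same route as the paper's proof: it reduces the error via Lemma~\ref{Lemma:UPofEachTerm} (under $\delta_{ij}\leq 2$) to a degree-four Gaussian polynomial $g(\mathbf{W})$ whose mean is read off from the moment computation of Theorem~\ref{Theorem:expectation}, truncates the $tM(I-r)$ relevant entries of $\mathbf{W}$ at level $\kappa$ to produce the factor $\mathsf{erf}\left(\frac{\kappa}{\sqrt{2}}\right)^{tM(I-r)}$, and applies McDiarmid's inequality (Lemma~\ref{Lemma:McDiarmid}) conditioned on that event with bounded differences of order $\kappa^4\,\mathsf{E}[g(\mathbf{W})]$ at the relative deviation $\epsilon\,\mathsf{E}[g(\mathbf{W})]$, exactly as the paper does. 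The only cosmetic differences are that the paper perturbs whole columns of $\mathbf{W}$ rather than single entries and keeps the deterministic $\frac{4\lambda_j^2}{\sigma_i^2-\sigma_j^2}$ part outside $g$, and the two bookkeeping points you flag (making the squared difference constants collapse to a $\kappa^8$-multiple of $\mathsf{E}[g(\mathbf{W})]^2$, and replacing the conditional mean by the unconditional one) are handled in the paper with the same brevity you anticipate.
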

\begin{proof}
    See Appendix~\ref{Apdx:probability}.
\end{proof}
The upper bound on the DTD error in Theorem~\ref{Theorem:probability} holds \emph{almost surely} if the constant $\epsilon$ is sufficiently large. Comparing Theorem~\ref{Theorem:expectation} and Theorem~\ref{Theorem:probability}, one can make the important observation that as the communication time ($t$) progresses, both the DTD error and its expectation vanish at the same rate of
\begin{equation}\label{eq:scalinglaw}
    \mathsf{Error} \propto \frac{1}{tM}\sum_{i=1}^r\sum_{j\geq r+1}\frac{\lambda_j^2+\lambda_j\mathsf{Tr}(\mathbf{\Lambda})}{\sigma_i^2-\sigma_j^2}.
\end{equation}
Another observation is that non-principal components of the tensor contribute to the DTD error but the effect is negligible when the eigen-gap is large.

%

\section{Optimal Sketch Selection for FlyCom$^2$}\label{section:selection}
Discarding aggregated sketches that have been transmitted under unfavourable channel conditions can improve the FlyCom$^2$ performance. This motivates us to design a sketch-selection scheme in this section.

\subsection{Threshold Based Sketch Selection}
First, we follow the approach in~\cite{GXZhuAirComp2019} to design the receive beamforming, $\{\mathbf{A}_t\}$, for MIMO AirComp. To this end, we decompose $\mathbf{A}_t$ as $\mathbf{A}_t = \eta_t\mathbf{U}_{\mathbf{A}_t}$, where the positive scalar $\eta_t$ is called a denoising factor and $\mathbf{U}_{\mathbf{A}_t}$ is an $M\times N_{\text{r}}$ unitary matrix. Following similar steps as in~\cite{GXZhuAirComp2019}, we can show that to minimize the DTD error bounds in Theorem~\ref{Theorem:expectation} and~\ref{Theorem:probability}, the beamformer component should be aligned with the channels of devices as
\begin{equation}\label{eq:receive_alignment}
    \mathbf{U}_{\mathbf{A}_t}^{\top}=\mathcal{S}_{M}\left(\frac{1}{K}\sum_{k}\lambda_{\mathbf{H}_{t,k}}\mathbf{U}_{\mathbf{H}_{t,k}}\mathbf{U}_{\mathbf{H}_{t,k}}^{\top}\right),
\end{equation}
where $\lambda_{\mathbf{H}_{t,k}}$ and $\mathbf{U}_{\mathbf{H}_{t,k}}$ denote the $N_{\text{t}}$-th eigenvalue and the first $N_{\text{t}}$ eigenvectors of  $\mathbf{H}_{t,k}\mathbf{H}_{t,k}^{\top}$, respectively. Furthermore, the denoising factor $\eta_t$ should cope with the weakest channel by being
\begin{equation}\label{eq:denoising}
    \eta_t = \max_k \frac{\mathsf{Tr}(\mathbf{X}_k^{\top}\mathbf{X}_k)}{IP} \mathsf{Tr}\left(\left(\mathbf{U}_{\mathbf{A}_t}\mathbf{H}_{t,k}\mathbf{H}_{t,k}^{H}\mathbf{U}_{\mathbf{A}_t}^{H}\right)^{-1}\right).
\end{equation}
It follows from~\eqref{eq:receive_alignment} and~\eqref{eq:denoising} that $\mathsf{Tr}(\mathbf{A}_t^{H}\mathbf{A}_t) = \eta_tM$ and $\lambda_j$ in DTD error bounds in Theorem~\ref{Theorem:expectation} and~\ref{Theorem:probability} can be expressed as
\begin{equation}\label{eq:lambda_j}
    \lambda_j=\sigma_j^2+\frac{\sigma^2}{2t}\sum_{\ell\leq t}\eta_{\ell},
\end{equation}
which shows that the error relies on only the denoising factor up to the current time slot. The result also suggests that it is preferable to select from the received sketches $\{\tilde{\mathbf{Y}}_n\}_{\ell\leq t}$ those associated with small $\eta_{\ell}$ that reflects a favorable channel condition. Naturally, we can derive a threshold based selection scheme as follows:
\begin{equation}\label{eq:selection}
    \boxed{\tilde{\mathbf{Y}}_{\ell}\ \mathrm{is\ selected\ if}\ \eta_{\ell}\leq \eta_{\mathrm{th}},\ \forall \ell\leq t,}
\end{equation}
where the threshold $\eta_{\mathrm{th}}$ is optimized in the sequel.

\subsection{Threshold Optimization}
The threshold, $\eta_{\mathrm{th}}$, in~\eqref{eq:selection}, needs to be optimized to minimize the error in~\eqref{eq:scalinglaw}. Solving the problem is hindered by that the singular values $\{\sigma_j\}$ are not available at the server in advance. We tackle this problem by designing a practical optimization scheme. To this end, we resort to using an upper bound on the DTD error as shown below.

\begin{Lemma}\label{Lemma:selection}
    \emph{Let $\tilde{M}$ denote the number of aggregated sketches selected from $\{\tilde{\mathbf{Y}}_{\ell}\}_{\ell\leq t}$ based on~\eqref{eq:selection} with the threshold $\eta_{\mathrm{th}}$, the DTD error in~\eqref{eq:scalinglaw} satisfies
    \begin{equation*}
        \frac{1}{\tilde{M}}\sum_{i=1}^r\sum_{j\geq r+1}\frac{\lambda_j^2+\lambda_j\mathsf{Tr}(\mathbf{\Lambda})}{\sigma_i^2-\sigma_j^2}\leq \frac{c}{\tilde{M}}\left[1+\frac{r\sigma^2\eta_{\mathrm{th}}}{2\sum_{k}\mathsf{Tr}(\mathbf{X}_k^{\top}\mathbf{X}_k)}\right]^2,
    \end{equation*}
    where $c$ is a constant.
    }
\end{Lemma}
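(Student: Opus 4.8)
The plan is to upper-bound each summand $\frac{\lambda_j^2+\lambda_j\mathsf{Tr}(\mathbf{\Lambda})}{\sigma_i^2-\sigma_j^2}$ uniformly over $i\leq r<j$ and then bound the double sum by $c\,[\,\cdot\,]^2$ for a suitable constant $c$, with the key being to control the noise contribution to $\lambda_j$ in terms of the threshold $\eta_{\mathrm{th}}$. First I would recall from~\eqref{eq:lambda_j} that, after sketch selection keeps $\tilde{M}$ sketches all satisfying $\eta_\ell\leq\eta_{\mathrm{th}}$, the effective eigenvalue becomes $\lambda_j=\sigma_j^2+\frac{\sigma^2}{2\tilde{M}}\sum_{\ell:\text{selected}}\eta_\ell\leq \sigma_j^2+\frac{\sigma^2}{2}\eta_{\mathrm{th}}$, since the average of selected denoising factors is at most $\eta_{\mathrm{th}}$. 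Denote $\nu\overset{\triangle}{=}\frac{\sigma^2}{2}\eta_{\mathrm{th}}$, so $\lambda_j\leq\sigma_j^2+\nu$ and, for $j\geq r+1$, $\sigma_j^2\leq\sigma_{r+1}^2$ is small; in particular $\lambda_j\leq\sigma_{r+1}^2+\nu$. Similarly $\mathsf{Tr}(\mathbf{\Lambda})=\sum_i\sigma_i^2+I\nu=\mathsf{Tr}(\mathbf{X}^\top\mathbf{X})+I\nu$, which I would write as a bounded multiple of $\mathsf{Tr}(\mathbf{X}^\top\mathbf{X})=\sum_k\mathsf{Tr}(\mathbf{X}_k^\top\mathbf{X}_k)$ times $(1+\text{const}\cdot\nu/\mathsf{Tr}(\mathbf{X}^\top\mathbf{X}))$.

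Next, I would combine these: the numerator $\lambda_j^2+\lambda_j\mathsf{Tr}(\mathbf{\Lambda})$ is dominated by the cross term $\lambda_j\mathsf{Tr}(\mathbf{\Lambda})$ (since $\lambda_j\leq\mathsf{Tr}(\mathbf{\Lambda})$), so up to a factor of $2$ it is at most $2\lambda_j\mathsf{Tr}(\mathbf{\Lambda})\leq 2(\sigma_{r+1}^2+\nu)(\mathsf{Tr}(\mathbf{X}^\top\mathbf{X})+I\nu)$. Because $\sigma_{r+1}^2\ll\sigma_r^2$ (the eigen-gap assumption) and $\mathsf{Tr}(\mathbf{X}^\top\mathbf{X})=\Theta(r\sigma_r^2)$ under the equal-principal-singular-values model, both $\sigma_{r+1}^2$ and $I\nu$ relative to $\mathsf{Tr}(\mathbf{X}^\top\mathbf{X})$ get absorbed, and the factor $(\sigma_{r+1}^2+\nu)(\mathsf{Tr}(\mathbf{X}^\top\mathbf{X})+I\nu)$ can be written as $\mathsf{Tr}(\mathbf{X}^\top\mathbf{X})\cdot\bigl(1+\frac{\nu}{\text{something}}\bigr)\cdot\bigl(\sigma_{r+1}^2+\nu\bigr)$; the cleanest route is to bound both linear-in-$\nu$ factors by a common affine expression $c_0\bigl(1+\frac{r\sigma^2\eta_{\mathrm{th}}}{2\sum_k\mathsf{Tr}(\mathbf{X}_k^\top\mathbf{X}_k)}\bigr)$ so that their product is $c_0^2[\,\cdot\,]^2$. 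Meanwhile the eigen-gap denominator $\sigma_i^2-\sigma_j^2\geq\sigma_r^2-\sigma_{r+1}^2$ is a fixed positive constant independent of $\eta_{\mathrm{th}}$, and the double sum has $r(I-r)$ terms, a constant. Collecting the constant $r(I-r)$, the $2$ from the numerator split, $c_0^2$, and the reciprocal gap into a single constant $c$, and dividing by $\tilde M$, gives exactly the claimed bound.

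The main obstacle I anticipate is making the two "absorb $\nu$ into $(1+\text{const}\cdot\eta_{\mathrm{th}}/\sum_k\mathsf{Tr}(\mathbf{X}_k^\top\mathbf{X}_k))$" steps honest: one needs $\sigma_{r+1}^2+\nu\leq c_0(1+\beta\eta_{\mathrm{th}})$ and $\mathsf{Tr}(\mathbf{X}^\top\mathbf{X})+I\nu\leq c_0'\mathsf{Tr}(\mathbf{X}^\top\mathbf{X})(1+\beta\eta_{\mathrm{th}})$ with the \emph{same} $\beta=\frac{r\sigma^2}{2\sum_k\mathsf{Tr}(\mathbf{X}_k^\top\mathbf{X}_k)}$ appearing in the statement. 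The second inequality forces $I\nu=\frac{I\sigma^2}{2}\eta_{\mathrm{th}}\leq c_0'\,\mathsf{Tr}(\mathbf{X}^\top\mathbf{X})\,\beta\,\eta_{\mathrm{th}}$, i.e. $I\leq c_0' r$, which holds with $c_0'=I/r$; the first needs $\sigma_{r+1}^2\leq c_0$ (fine, $c_0$ is a free constant absorbing $\sigma_{r+1}^2$ and $\sigma_r^2$) and $\nu\leq c_0\beta\eta_{\mathrm{th}}$, i.e. $\frac{\sigma^2}{2}\leq c_0\frac{r\sigma^2}{2\sum_k\mathsf{Tr}(\mathbf{X}_k^\top\mathbf{X}_k)}$, giving $c_0\geq\frac{\sum_k\mathsf{Tr}(\mathbf{X}_k^\top\mathbf{X}_k)}{r}=\Theta(\sigma_r^2)$. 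So all pieces are genuinely constants under the model's assumptions; the care is simply in threading a single $\beta$ through both factors and then folding everything — $r(I-r)$, the gap, $c_0$, $c_0'$, and the numerator-split factor — into the one constant $c$. I would present this as a short chain of inequalities rather than tracking the constant explicitly, noting at the end that $c$ depends only on $\{\sigma_i\}$, $r$, $I$ and not on $\eta_{\mathrm{th}}$ or $\tilde M$.
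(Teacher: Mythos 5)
Your proposal is correct and follows essentially the same route as the paper's proof: bound the eigen-gap denominator below by $\sigma_r^2-\sigma_{r+1}^2$, use the selection rule to get $\lambda_j\leq\sigma_j^2+\tfrac{\sigma^2}{2}\eta_{\mathrm{th}}$, exploit $\sigma_j^2\leq C/r$ with $C=\sum_k\mathsf{Tr}(\mathbf{X}_k^{\top}\mathbf{X}_k)$ so that both $\lambda_j$ and $\mathsf{Tr}(\mathbf{\Lambda})$ are bounded by multiples of $\bigl(1+\tfrac{r\sigma^2\eta_{\mathrm{th}}}{2C}\bigr)C$, and fold $r$, $I$, and the gap into the constant $c$. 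Your only cosmetic deviation is splitting the numerator via $\lambda_j^2+\lambda_j\mathsf{Tr}(\mathbf{\Lambda})\leq 2\lambda_j\mathsf{Tr}(\mathbf{\Lambda})$, whereas the paper bounds $\lambda_j[\lambda_j+\mathsf{Tr}(\mathbf{\Lambda})]$ directly; the substance is identical.
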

\begin{proof}
    See Appendix~\ref{Apdx:selection}.
\end{proof}

Lemma~\ref{Lemma:selection} suggests that a sub-optimal threshold can be obtained by minimizing the error upper bound. Let $S$ denote a set and $|S|$ its cardinality. Then, the threshold-optimization problem can be formulated as
\begin{equation}\label{problem:threshold}
    \begin{aligned}
    	\mathop{\min}_{\eta_{\mathsf{th}}}&\ \ \frac{1}{|S|}\left[1+\frac{r\sigma^2\eta_{\mathrm{th}}}{2\sum_{k}\mathsf{Tr}(\mathbf{X}_k^{\top}\mathbf{X}_k)}\right]^2\\
    	\mathrm{s.t.}&\ \ S = \left\{\eta_{\ell}|\eta_{\ell}\leq \eta_{\mathsf{th}}, \ell\leq t\right\},
    \end{aligned}
\end{equation}
where $\eta_{\ell}$ follows the definition in~\eqref{eq:denoising}. 
One can observe that the objective function of~\eqref{problem:threshold} is a monotonically increasing function w.r.t. the variable $\eta_{\mathrm{th}}$ if $\tilde{M}$ is fixed. Such piecewise monotonicity of the objective function~\eqref{problem:threshold} renders a linear-search solution method (e.g. bisection search) infeasible, but allows the optimal solution, $\eta_{\mathrm{th}}^{\star}$, to be restricted into a finite set, say $\eta_{\mathsf{th}}^{\star} \in \left\{\eta_1,\eta_2,\cdots,\eta_t\right\}$. Then, finding $\eta_{\mathsf{th}}^{\star}$ is simple by exhausted enumeration as follows. Let $\tilde{M}_{\ell}$ denote the number of selected sketches corresponding to the threshold fixed as $\eta_{\mathsf{th}} = \eta_{\ell}$. Define
\begin{equation*}
    \ell^{\star} = \arg\min_{\ell\leq t}\frac{1}{\tilde{M}_{\ell}}\left[1+\frac{r\sigma^2\eta_{\ell}}{2\sum_{k}\mathsf{Tr}(\mathbf{X}_k^{\top}\mathbf{X}_k)}\right]^2.
\end{equation*}
The optimal threshold solving the problem in~\eqref{problem:threshold} is $\eta_{\mathsf{th}} = \eta_{\ell^{\star}}$. Two remarks are offered as follows. First, the above research for the optimal threshold has complexity linearly proportional to $t$, the population of accumulated sketches at the server. Second, the implementation of the optimization at the server requires feedback of a scalar from each device, namely $\mathsf{Tr}\left(\mathbf{X}_k^{\top}\mathbf{X}_k\right)$ from device $k$.  



%

\section{Experimental Results}\label{section:experiment}
\begin{figure*}[t]
    \centering
    \subfigure[$M= 1$]{\label{subfig:M1}\includegraphics[width=0.24\textwidth]{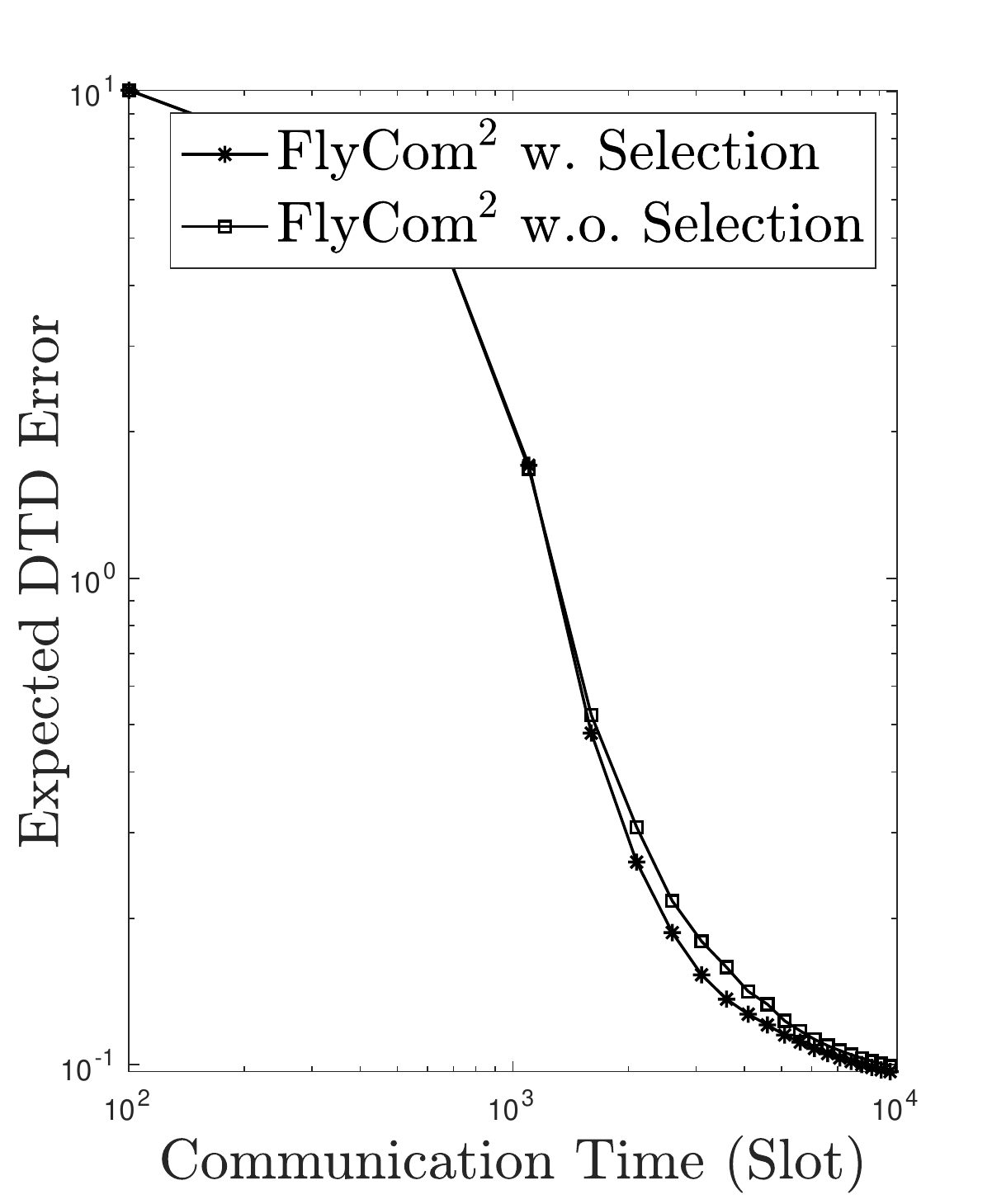}}
    \subfigure[$M= 2$]{\label{subfig:M2}\includegraphics[width=0.24\textwidth]{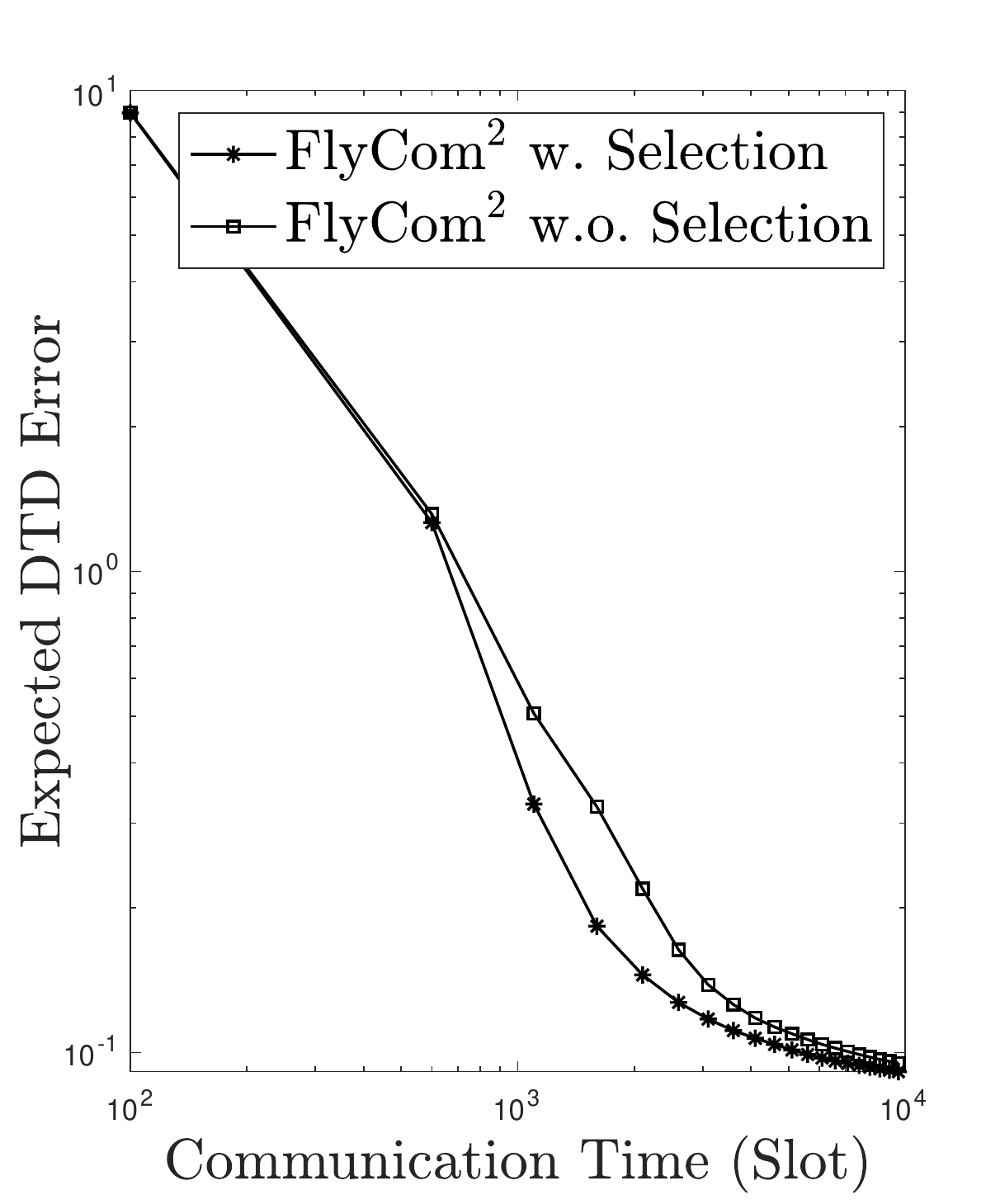}}
    \subfigure[$M= 3$]{\label{subfig:M3}\includegraphics[width=0.24\textwidth]{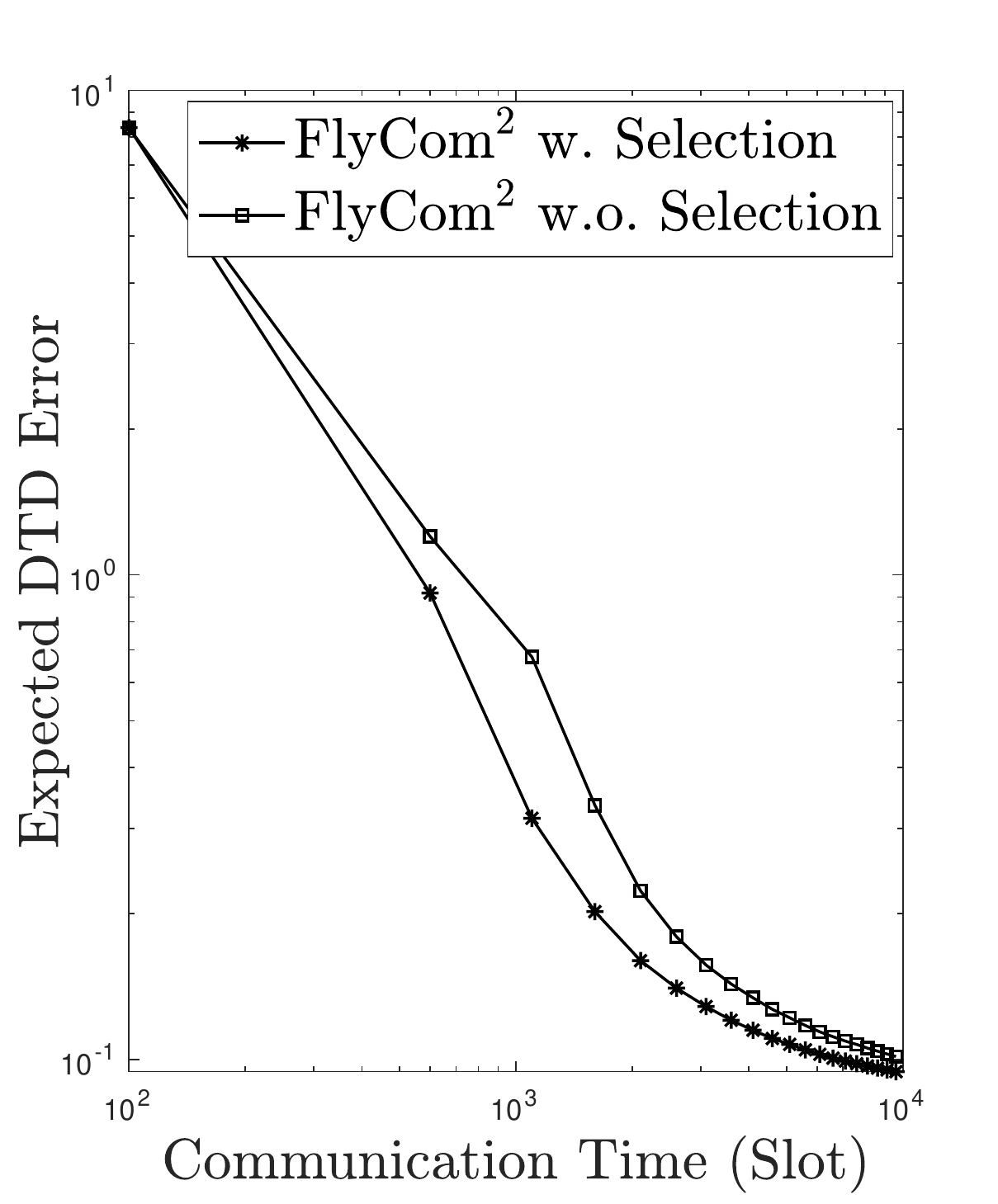}}
    \subfigure[$M= 4$]{\label{subfig:M4}\includegraphics[width=0.24\textwidth]{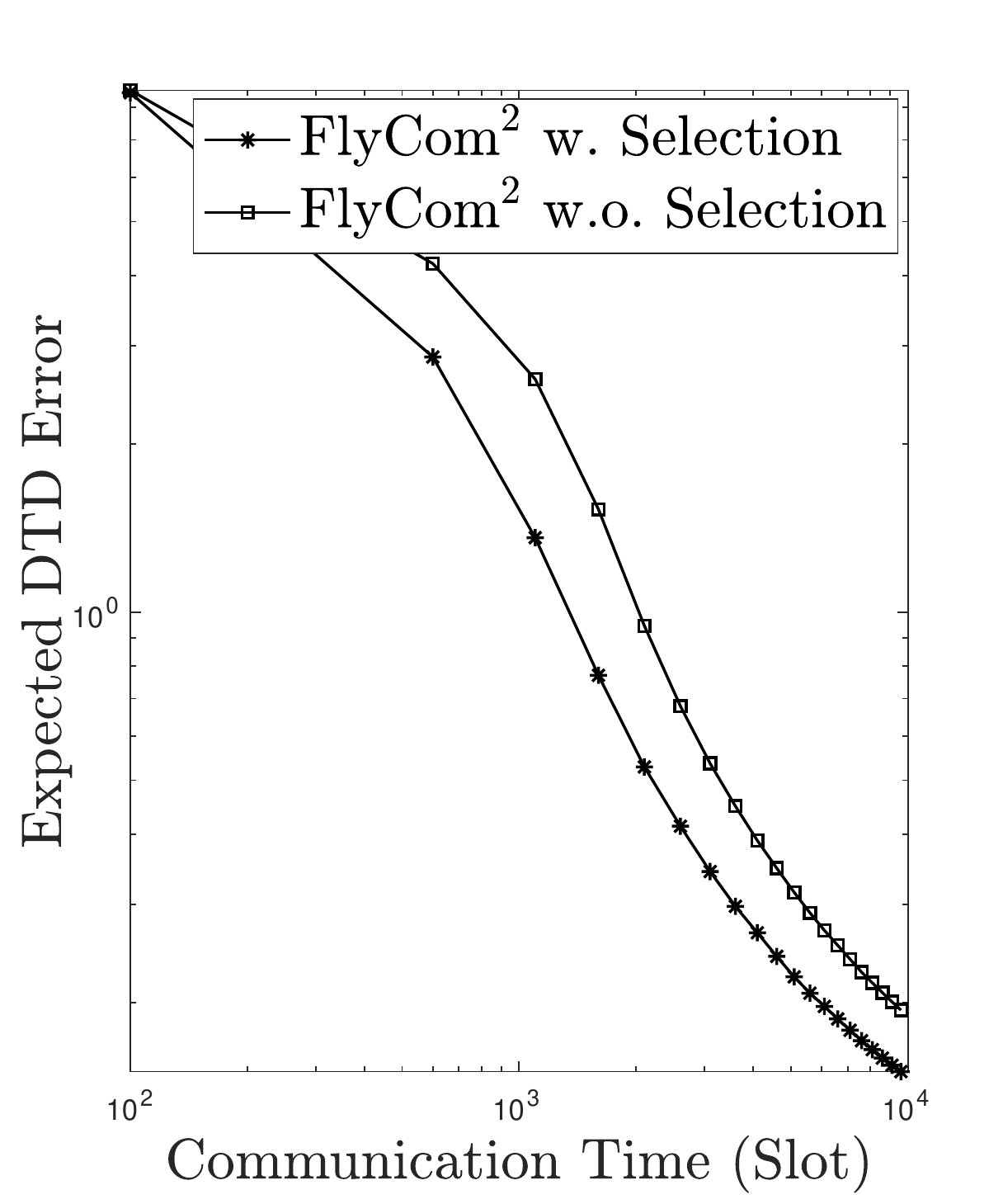}}
    \caption{Error-performance comparison between FlyCom$^2$ with and without sketch selection, SNR $\gamma = 10\mathrm{dB}$.}
    \label{fig:selection}
\end{figure*}
\subsection{Experimental Settings}
First, the MIMO AirComp system is configured to have the following settings. There are $K=20$ edge devices connected to the server. The array sizes at each device and the server are set as $N_{\text{t}}=4$ and $N_{\text{r}}=16$, respectively.  The Rayleigh channel with shadow fading is adopted, in which each MIMO channel is given as $\mathbf{H}_{t,k} = \sqrt{\beta_{t,k}}\hat{\mathbf{H}}_{t,k}$ with $\beta_{t,k}$ following a Gamma distribution $\Gamma(1.2,0.83)$ (see, e.g.~\cite{Shadow_fading2016}) and $\hat{\mathbf{H}}_{t,k}$ comprising i.i.d. $\mathcal{CN}(0,1)$ entries. Different channels are independent. Second, we use a synthetic data model following the DTD literature (see, e.g.~\cite{StreamingTD}). Considering the computation of the $n$-th factor matrix, the unfolding matrix of the data tensor has the size of $I_n\times (\prod_{j=1,j\neq n}^N) = 100\times 1500$, and its columns are uniformly distributed over devices. Under such settings, each local sketch has the length of $I_n = 100$ that is smaller than a single channel coherence block (see, e.g.~\cite{Bjornson2016Tenmyths}, for justification). To demonstrate the performance of the proposed FlyCom$^2$ for data with a range of parameterized spectral distribution, the singular values of the unfolding matrix are set to decay with a polynomial rate:
\begin{equation*}
    \mathbf{\Sigma}_{\mathbf{X}} = \mathsf{diag}\left(1,\cdots,1,\frac{1}{2^{\xi}},\frac{1}{3^{\xi}},\cdots,\frac{1}{(I_n-r)^{\xi}}\right),
\end{equation*}
where the first $r=12$ principal singular values are fixed as $1$ and $\xi>0$ controls the decay rate of residual values. Furthermore, the left and right eigenspaces of the unfolding matrix are generated as those of random matrices with i.i.d. $\mathcal{N}(0,1)$ entries~\cite{StreamingTD}. 

Third, we consider two benchmarking schemes that are variants of SVD-based DTD.
\begin{itemize}
    \item Centroid SVD-DTD: Devices compute local principal eigenspaces $\{\hat{\mathbf{U}}_k\}$ of their on-device data samples by using SVD and the server then aggregates these local results as $\mathbf{P}=\frac{1}{K}\sum_k\hat{\mathbf{U}}_k\hat{\mathbf{U}}_k^{\top}$. The principal eigenspace of $\mathbf{P}$ represents the centroid of all local estimates $\{\hat{\mathbf{U}}_k\}$ on the Grassmannian manifold and is extracted to form a global estimate of the ground truth~\cite{JF2019estimation2019,chen2022analog}.
    \item Alignment SVD-DTD: The scheme follows a similar procedure as above except for aggregating local results, $\{\hat{\mathbf{U}}_k\}$, as $\mathbf{P}=\frac{1}{K}\sum_k\hat{\mathbf{U}}_k\mathbf{J}_k$, where the orthogonal matrices $\{\mathbf{J}_k\}$ are alignment matrices that are to be optimized by using past global estimates to improve the system performance~\cite{VC2021DPCA}. 
\end{itemize}
The aggregation operations in both benchmark schemes are implemented using MIMO AirComp~\cite{GZhu2021WCM,GXZhuAirComp2019} as FlyCom$^2$ for fair comparison.

\subsection{Performance Gain of Sketch Selection}

In Fig.~\ref{fig:selection}, we compare the error performance of FlyCom$^2$ between the cases with and without sketch selection. The communication time is measured by the total number of symbol slots used in uploading local sketches, namely $tI_n$, where $I_n$ is the number of rows of local sketches. We vary the dimension of the receive beamformer, $M$, from $1$ to $4$ to achieve different tradeoffs between channel diversity and multiplexing. It is observed from Fig.~\ref{fig:selection} that the proposed selection scheme helps reduce the expected DTD error for different $M$. The gain emerges when the communication time exceeds $M$-dependent threshold (e.g. $6000$ time slots  for $M=1$) as the total number of available sketches becomes sufficiently large. Another observation from Fig.~\ref{fig:selection} is that the DTD error without selection decreases at an approximately linear rate w.r.t. the communication time, which is aligned with our conclusion in~\eqref{eq:scalinglaw}. In the sequel, FlyCom$^2$ is assumed to have sketch selection with $M$ fixed as $2$.

\begin{figure*}[t]
    \centering
    \subfigure[Spectral decay rate $\xi = 1$]{\label{subfig:decay1}\includegraphics[width=0.3\textwidth]{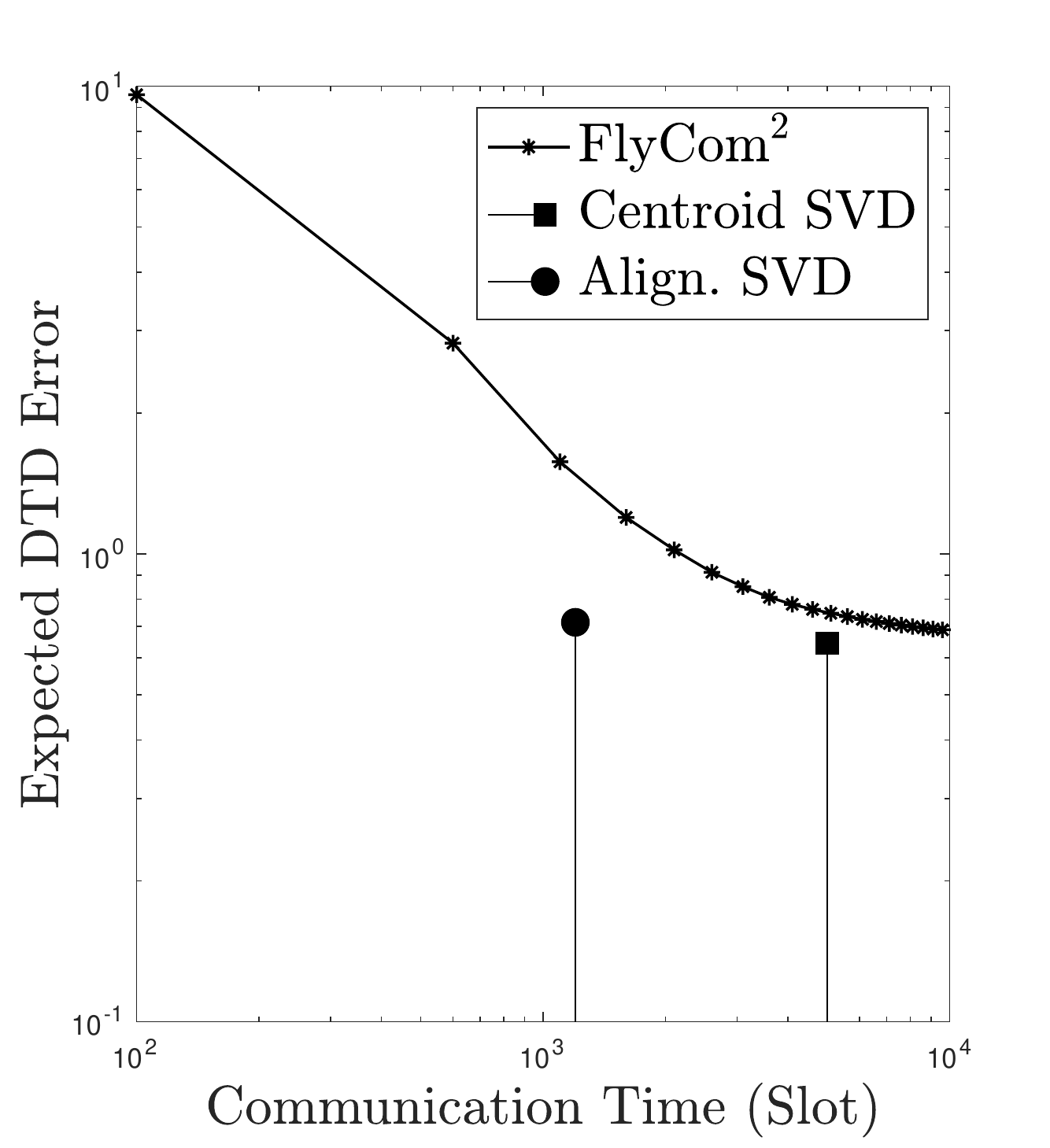}}
    \subfigure[Spectral decay rate $\xi = 2$]{\label{subfig:decay2}\includegraphics[width=0.3\textwidth]{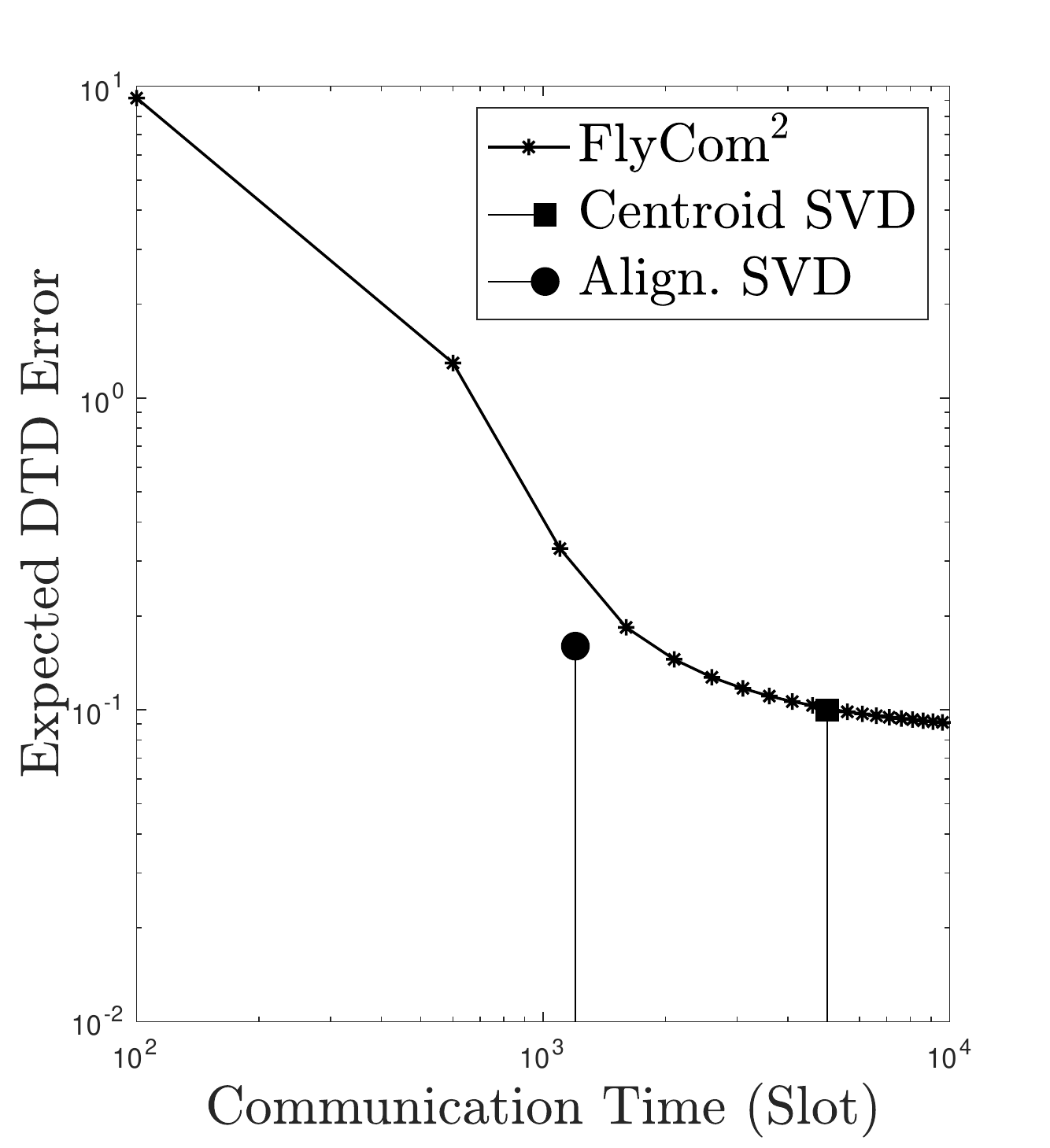}}
    \subfigure[Spectral decay rate $\xi = 3$]{\label{subfig:decay3}\includegraphics[width=0.3\textwidth]{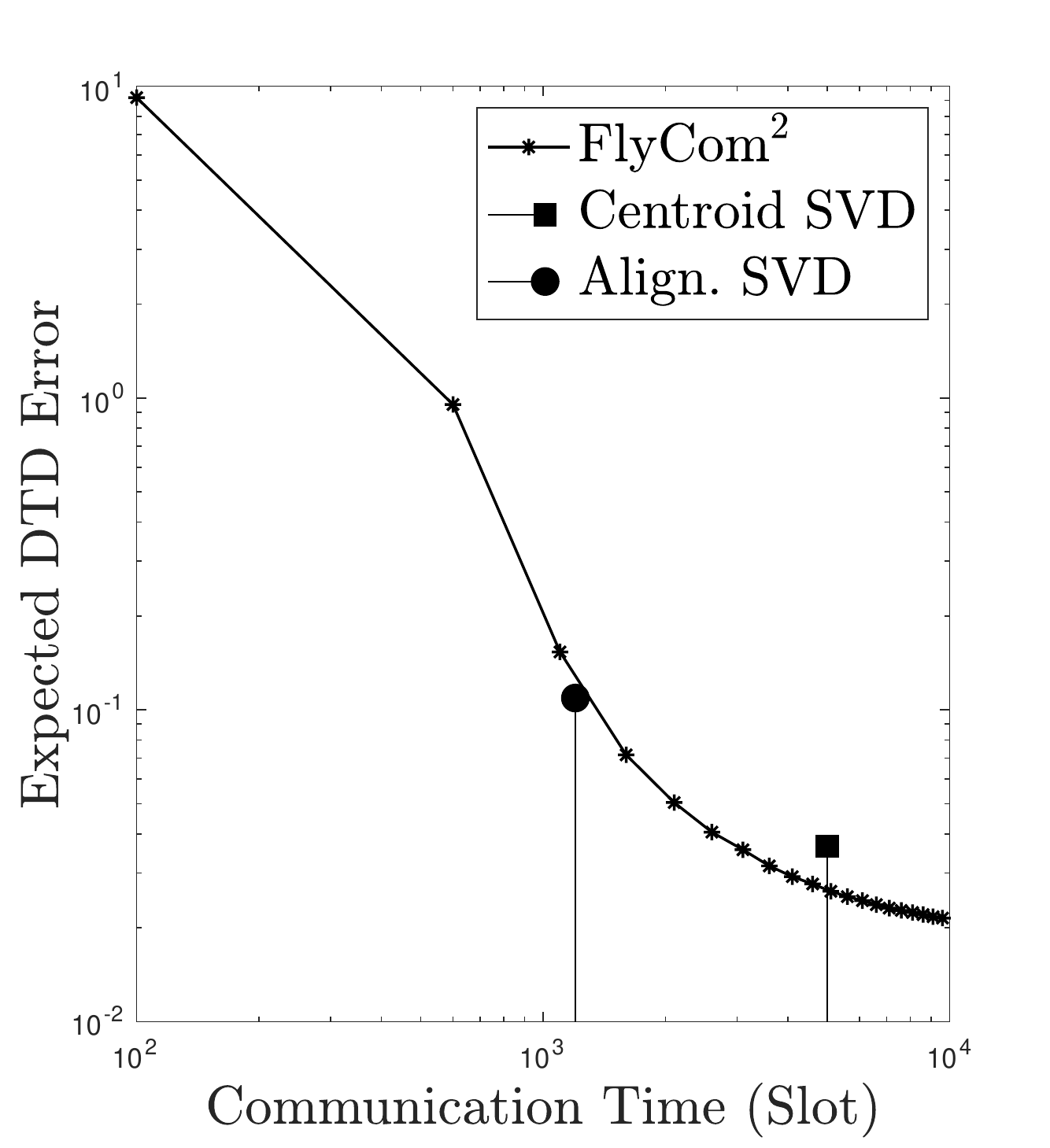}}
    \caption{FlyCom$^2$ versus Benchmark schemes, SNR $\gamma = 10\mathrm{dB}$.}
    \label{fig:benchmark}
\end{figure*}

\begin{figure*}[t]
    \centering
    \subfigure[Computation complexity ($I=100$, $M=4$)]{\label{subfig:complexity}\includegraphics[width=0.44\textwidth]{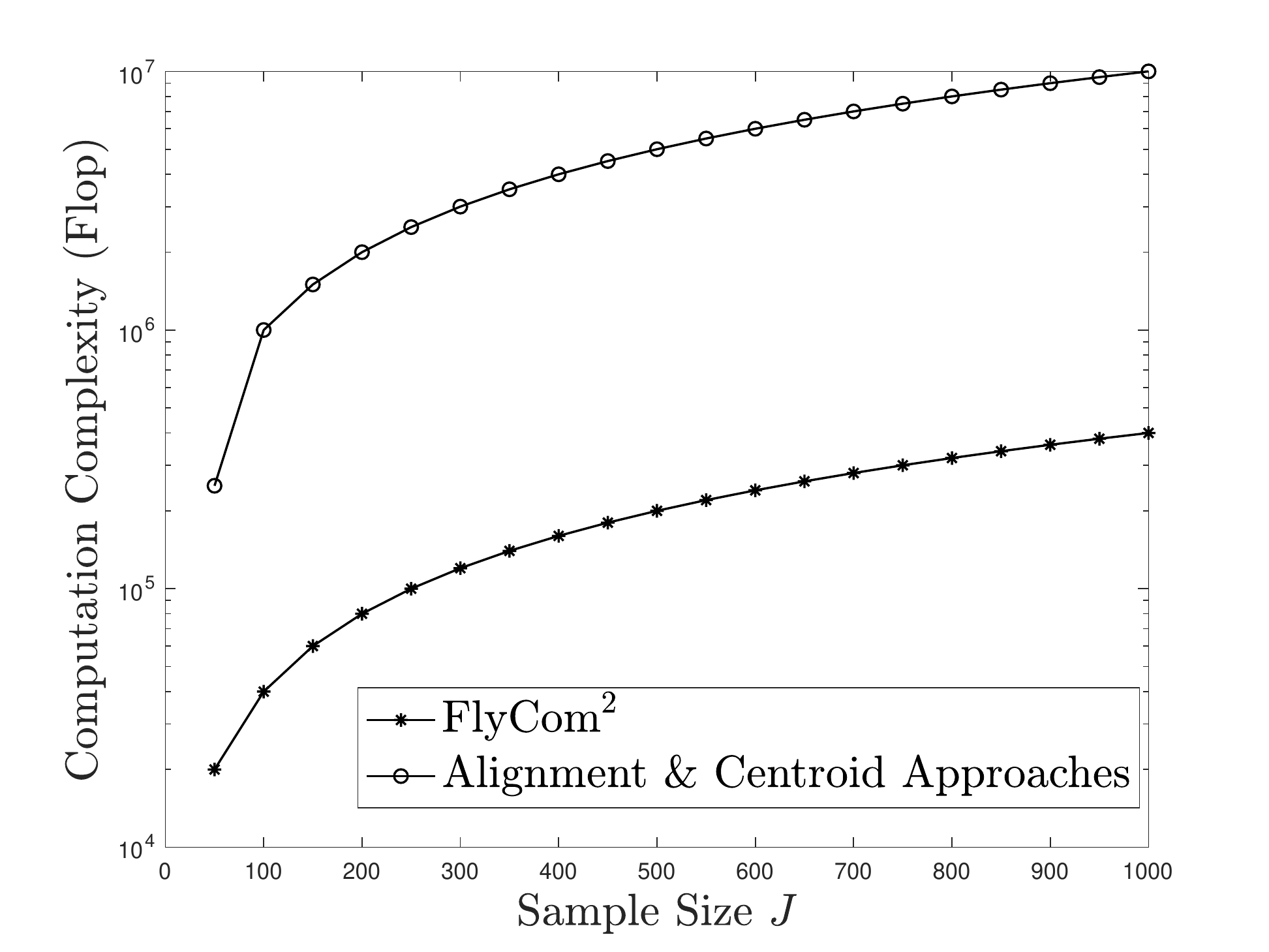}}
    \subfigure[Passes of raw data in memory]{\label{subfig:memory}\includegraphics[width=0.44\textwidth]{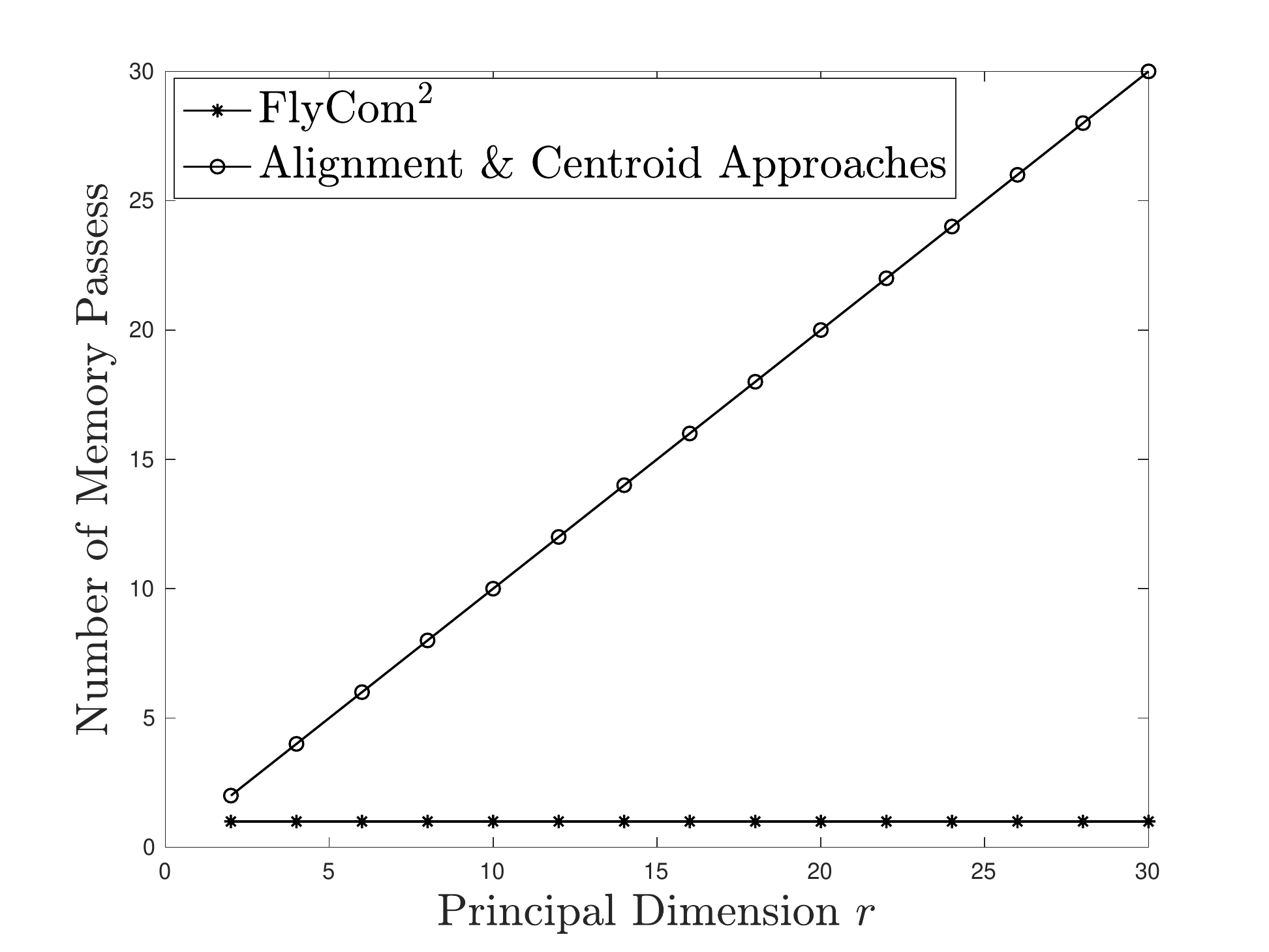}}
    \caption{Computation-cost comparison between FlyCom$^2$ based DTD and benchmarking schemes.}
    \label{fig:computation}
\end{figure*}
\subsection{Error Performance of FlyCom$^2$}

While FlyCom$^2$ requires much simpler on-device computation than benchmarking schemes (see Section~\ref{subsec:experiment_error}), we demonstrate in Fig.~\ref{fig:benchmark} that it can achieve comparable or even better error performance than the latter. Fig.~\ref{fig:benchmark} shows the expected DTD error versus the communication time. The performance of the benchmark schemes with one-shot computation and communication appears as single points in the figure. The results in Fig.~\ref{fig:benchmark} show that FlyCom$^2$-based DTD achieves comparable decomposition accuracy as the benchmark schemes with  progressing time. Furthermore, its performance is improved by increasing the decay rate ($\xi$) of the singular values, which validates our conclusion that large eigen-gaps help distinguish principal from non-principal eigenvectors during random sketching. For instance, for $\xi=2$, the proposed scheme approaches the centroid and alignment based SVD-DTD in performance for communication time larger than $2600$ and $4000$ symbol slots, respectively. As $\xi$ increases to $3$, the former achieves the same error performance as the alignment-based method while outperforming the centroid based method. Furthermore, one can observe from Fig.~\ref{fig:benchmark} that the proposed on-the-fly framework realizes a flexible trade-off between the decomposition accuracy and communication time, which is the distinctive feature of the design.

\subsection{Device Computation Costs of FlyCom$^2$}\label{subsec:experiment_error}
In Fig.~\ref{fig:computation}, we compare two kinds of computational costs at devices, namely complexity and memory passes, between FlyCom$^2$ and benchmark schemes. The complexity refers to the flop count of computation, and the memory passes are equal to the number of memory visits for reading data entries. The computational advantage of FlyCom$^2$ is demonstrated by comparing the cost of matrix-vector multiplication in random sketching with that of deterministic SVD used in the one-shot benchmarking schemes. Specifically, given $I\times J$ local unfolding matrices, deterministic SVD has the complexity proportional to $\min\{I,J\}^2\times \max\{I,J\}$~\cite{SVDComplexity2019}; based on matrix multiplication, the complexity of FlyCom$^2$ to yield an $M$-dimensional sketch at each time slot is $IJM$. For the schemes in comparison, their curves of computation complexity versus sample size are plotted in Fig.~\ref{subfig:complexity}. One can observe that the proposed FlyCom$^2$ dramatically reduces devices' complexity by more than an order of magnitude. On the other hand, Fig.~\ref{subfig:memory} displays the curves of the number of memory passes versus the principal dimensionality, $r$. The proposed design keeps a constant memory pass for matrix multiplication, as opposed to that of SVD which increases linearly with the principal dimensionality. For example, the number of memory passes is reduced using FlyCom$^2$ by $30$ times for $r=30$.

%
\section{Conclusion}\label{section:conclusion}
We have proposed the FlyCom$^2$ framework, that supports the progressive computation of DTD in mobile networks. Through the use of  random sketching techniques at devices, the traditional one-shot high-dimensional mobile communication and computation is reduced to low-dimensional operations spread over multiple time slots. Thereby, the resource constraints of devices are overcome. Furthermore, FlyCom$^2$ obtains its distinctive feature of progressive improvement of DTD accuracy with increasing communication time, providing robustness against link disruptions. To develop the FlyCom$^2$ based DTD framework, we have designed an on-the-fly sub-space estimator and a sketch-selection scheme to ensure close-to-optimal system performance.

Beyond DTD, high-dimensional communication and computation pose a general challenge for machine learning and data analytics in wireless networks. We expect that FlyCom$^2$ can be further developed into a broad approach for efficient deployment of relevant algorithms such as federated learning and distributed optimization. For the current FlyCom$^2$ targeting DTD, its extension to accommodate other wireless techniques such as broadband transmission and radio resource management is also a direction worth pursuing.

\appendix
\subsection{Proof of Lemma~\ref{Lemma:GaussianMatrix}}\label{Apdx:representation}
In~\eqref{eq:observations}, both the $\mathbf{\mathbf{F}}_{t}$ and $\tilde{\mathbf{Z}}_t$ have i.i.d.  zero-mean Gaussian entries, thereby enforcing the observation $\tilde{\mathbf{Y}}_t$ to be a Gaussian matrix. This conclusion holds for all observations and thus the aggregation $\hat{\mathbf{Y}}_t$ is a Gaussian matrix and can be decomposed into $\mathbf{C}^{\frac{1}{2}}\mathbf{W}\mathbf{D}^{\frac{1}{2}}$. Therein, $\mathbf{W}$ has i.i.d. $\mathcal{N}(0,1)$ entries. The covariance matrices $\mathbf{C}$ and $\mathbf{D}$ are computed as follows. First, there is $\mathsf{E}[\hat{\mathbf{Y}}_t\hat{\mathbf{Y}}_t^{\top}]=\mathsf{E}[\mathbf{C}^{\frac{1}{2}}\mathbf{W}\mathbf{D}\mathbf{W}^{\top}\mathbf{C}^{\frac{1}{2}}]$, where the right hand side of the equation equals to $\mathbf{C}\mathsf{Tr}(\mathbf{D})$ while the left hand side is given as
\begin{align*}
    \mathsf{E}[\hat{\mathbf{Y}}_t\hat{\mathbf{Y}}_t^{\top}]&=\sum_{\ell\leq t}\mathsf{E}[\tilde{\mathbf{Y}}_t\tilde{\mathbf{Y}}_t^{\top}]\\
    & = tM\mathbf{X}\mathbf{X}^{\top} + \frac{1}{2}\sigma^2\sum_{\ell\leq t}\mathsf{Tr}(\mathbf{A}_{\ell}^{H}\mathbf{A}_{\ell})\mathbf{I}_I.
\end{align*}
On the other hand, using $\mathsf{E}[\hat{\mathbf{Y}}_t^{\top}\hat{\mathbf{Y}}_t] = \mathsf{E}[\mathbf{D}^{\frac{1}{2}}\mathbf{W}^{\top}\mathbf{C}\mathbf{W}\mathbf{D}^{\frac{1}{2}}] = \mathbf{D}\mathsf{Tr}(\mathbf{C})$, we have
\begin{align*}
    \mathbf{D}\mathsf{Tr}(\mathbf{C})&=\mathsf{E}\left[[\tilde{\mathbf{Y}}_1,\tilde{\mathbf{Y}}_2,\cdots,\tilde{\mathbf{Y}}_t]^{\top}[\tilde{\mathbf{Y}}_1,\tilde{\mathbf{Y}}_2,\cdots,\tilde{\mathbf{Y}}_t]\right]\\
    & = \mathsf{diag}\left(\mathsf{E}[\tilde{\mathbf{Y}}_1^{\top}\tilde{\mathbf{Y}}_1],\cdots,\mathsf{E}[\tilde{\mathbf{Y}}_t^{\top}\tilde{\mathbf{Y}}_t]\right),
\end{align*}
where an arbitrary diagonal block, say $\mathsf{E}[\tilde{\mathbf{Y}}_{\ell}^{\top}\tilde{\mathbf{Y}}_{\ell}]$, $\forall \ell\leq t$, can be expressed as
\begin{align*}
    \mathsf{E}[\tilde{\mathbf{Y}}_{\ell}^{\top}\tilde{\mathbf{Y}}_{\ell}] & = \mathsf{E}[\mathbf{F}^{\top}\mathbf{X}^{\top}\mathbf{X}\mathbf{F}] + \mathsf{E}[\tilde{\mathbf{Z}}_t\tilde{\mathbf{Z}}_t^{\top}]\\
    & = \mathsf{Tr}(\mathbf{X}^{\top}\mathbf{X})\mathbf{I}_{M} + \frac{1}{2}I\sigma^2\mathbf{A}_{\ell}\mathbf{A}_{\ell}^{H}.
\end{align*} 
Concluding the above results yields the covariance matrices as $\mathbf{C} = \mathbf{X}\mathbf{X}^{\top} + \frac{1}{2tM}\sigma^2\sum_{\ell\leq t}\mathsf{Tr}(\mathbf{A}_{\ell}^{H}\mathbf{A}_{\ell})\mathbf{I}_I$ and $\mathbf{D} = 
    \frac{tM\mathsf{Tr}(\mathbf{X}^{\top}\mathbf{X})\mathbf{I}_{tM} + \frac{1}{2}ItM\sigma^2\mathsf{diag}(\mathbf{A}_1\mathbf{A}_1^{H},\cdots,\mathbf{A}_t\mathbf{A}_t^{H})}{tM\mathsf{Tr}(\mathbf{X}^{\top}\mathbf{X}) + \frac{1}{2}I\sigma^2\sum_{\ell\leq t}\mathsf{Tr}(\mathbf{A}_{\ell}^{H}\mathbf{A}_{\ell})}$, respectively. This completes the proof.

\subsection{Proof of Lemma~\ref{Lemma:deviation}}\label{Apdx:deviation}
First, rewrite the error as
\begin{align*}
    &\Vert(\mathbf{I}_I-\tilde{\mathbf{U}}\tilde{\mathbf{U}}^{\top})\mathbf{X}\Vert_F^2\\
    & = \sum_{j=1}^I\sigma_j^2 -\sum_{i=1}^r\sum_{j\neq i}\sigma_j^2\langle\tilde{\mathbf{u}}_i,\mathbf{u}_j\rangle^2 - \sum_{i=1}^r\sigma_i^2\langle\tilde{\mathbf{u}}_i,\mathbf{u}_i\rangle^2\\
    & =  \sum_{j\geq r+1}\sigma_j^2 - \sum_{i=1}^r\sum_{j\neq i}(\sigma_i^2-\sigma_j^2)\langle\tilde{\mathbf{u}}_i,\mathbf{u}_j\rangle^2,
\end{align*}
where the last step is due to $\langle\tilde{\mathbf{u}}_i,\mathbf{u}_i\rangle^2 = 1 - \sum_{j\neq i}\langle\tilde{\mathbf{u}}_i,\mathbf{u}_j\rangle^2$. 

Then, under the assumption of $\sigma_i=\sigma_j$, $\forall i,j\leq r$, the second term on the right side of the above equation can be rewritten as
\begin{align*}
    &\sum_{i=1}^r\sum_{j\neq i}(\sigma_i^2-\sigma_j^2)\langle\tilde{\mathbf{u}}_i,\mathbf{u}_j\rangle^2\\
    & = \sum_{i=1}^r\sum_{j\geq r+1}^I(\sigma_i^2-\sigma_j^2)\langle\tilde{\mathbf{u}}_i,\mathbf{u}_j\rangle^2 \\
    &\quad +\underbrace{\sum_{i=1}^r\sum_{j=1,j\neq i}^r(\sigma_i^2-\sigma_j^2)\langle\tilde{\mathbf{u}}_i,\mathbf{u}_j\rangle^2}_{=0}\\
    & = \sum_{i=1}^r\sum_{j\geq r+1}^I(\sigma_i^2-\sigma_j^2)\langle\tilde{\mathbf{u}}_i,\mathbf{u}_j\rangle^2.
\end{align*}
Putting the results above together, the conclusion in Lemma~\ref{Lemma:deviation} follows.

\subsection{Proof of Lemma~\ref{Lemma:UPofEachTerm}}\label{Apdx:deterministicerror}
According to Remark~\ref{Remark:analogtransmission}, we have $\lambda_i>\lambda_j$, $\forall i\leq r<j$. Then, using the perturbation theory~\cite[Theorem 3.1 \& Theorem 3.2]{pmlr_loukas17a}, the $\langle\tilde{\mathbf{u}}_i,\mathbf{u}_j\rangle$ can be upper bounded as
\begin{equation*}
 \langle\tilde{\mathbf{u}}_i,\mathbf{u}_j\rangle\leq \min\left\{\max \left\{2,2\frac{|\tilde{\lambda}_i-\lambda_i|}{|\tilde{\lambda}_i-\lambda_j|}\right\},\frac{\lambda_i-\lambda_j}{|\tilde{\lambda}_i-\lambda_j|}\right\}\frac{\Vert\mathbf{\Delta}\mathbf{u}_j\Vert_2}{\lambda_i-\lambda_j},
\end{equation*}
where $\lambda_i$ and $\tilde{\lambda}_i$ denote the $i$-th eigenvalues of $\mathbf{\Lambda}$ and $\frac{1}{tM}\mathbf{\Phi}_t\mathbf{\Phi}_t^{\top}$, respectively. Based on $\min\{\max\{a,b\},c\}= \max\{\min\{a,c\},\min\{b,c\}\}\leq \max\{a,\min\{b,c\}\}$, the upper bound can be further written as
\begin{align*}
\langle\tilde{\mathbf{u}}_i,\mathbf{u}_j\rangle&\leq \max \left\{2,\min\left\{\frac{2|\tilde{\lambda}_i-\lambda_i|}{|\tilde{\lambda}_i-\lambda_j|},\frac{\lambda_i-\lambda_j}{|\tilde{\lambda}_i-\lambda_j|}\right\}\right\}\frac{\Vert\mathbf{\Delta}\mathbf{u}_j\Vert_2}{\lambda_i-\lambda_j}\\
& =  \max \left\{2,\frac{2\min\{|\tilde{\lambda}_i-\lambda_i|,\lambda_i-\lambda_j\}}{|\tilde{\lambda}_i-\lambda_j|}\right\}\frac{\Vert\mathbf{\Delta}\mathbf{u}_j\Vert_2}{\lambda_i-\lambda_j}.
\end{align*}
Recall that $\lambda_i = \sigma_i^2 + \sigma^2\sum_{\ell\leq t}\mathsf{Tr}(\mathbf{A}_{\ell}^{H}\mathbf{A}_{\ell})/2tM$, we have $\lambda_i-\lambda_j = \sigma_i^2-\sigma_j^2$, which yields
\begin{align*}
    &(\sigma_i^2-\sigma_j^2)\langle\tilde{\mathbf{u}}_i,\mathbf{u}_j\rangle^2\\
    & = \max \left\{4,\frac{\min\{4|\tilde{\lambda}_i-\lambda_i|^2,(\sigma_i^2-\sigma_j^2)^2\}}{|\tilde{\lambda}_i-\lambda_j|^2}\right\}\frac{\Vert\mathbf{\Delta}\mathbf{u}_j\Vert_2^2}{\sigma_i^2-\sigma_j^2},
\end{align*}
which completes the proof.

\subsection{Proof of Theorem~\ref{Theorem:expectation}}\label{Apdx:expectation}
First, the square vector norm, $\Vert\mathbf{\Delta}\mathbf{u}_j\Vert_2^2$, can be rewritten as 
\begin{align*}
\Vert\mathbf{\Delta}\mathbf{u}_j\Vert_2^2 = &\mathbf{u}_j^{\top}\left(\frac{1}{tM}\mathbf{\Phi}_t\mathbf{\Phi}_t^{\top}-\mathbf{U}_{\mathbf{X}}\mathbf{\Lambda}\mathbf{U}_{\mathbf{X}}^{\top}\right)^2\mathbf{u}_j\\
 = &\lambda_j^2-\frac{2\lambda_j^2}{tM}\mathbf{e}_j\mathbf{W}\mathbf{W}^{\top}\mathbf{e}_j^{\top}\\
 &+\frac{\lambda_j}{(tM)^2}\mathbf{e}_j\mathbf{W}\mathbf{W}^{\top}\mathbf{\Lambda}\mathbf{W}\mathbf{W}^{\top}\mathbf{e}_j^{\top},
\end{align*}
where we define $\mathbf{e}_j = [0,\cdots,0,1,0,\cdots,0]$ with the $j$-th element being $1$ and other elements being $0$. Since $\mathbf{W}$ has i.i.d. $\mathcal{N}(0,1)$ entries, the expectation of the upper bound of $(\sigma_i^2-\sigma_j^2)\langle\tilde{\mathbf{u}}_i,\mathbf{u}_j\rangle^2$ can be expressed as
\begin{align*}
    &(\sigma_i^2-\sigma_j^2)\mathsf{E}[\langle\tilde{\mathbf{u}}_i,\mathbf{u}_j\rangle^2]\\
    & \leq\frac{4}{\sigma_i^2-\sigma_j^2}\mathsf{E}[\Vert\mathbf{\Delta}\mathbf{u}_j\Vert_2^2]\\
    & = \frac{4}{\sigma_i^2-\sigma_j^2}\left[\frac{\lambda_j}{(tM)^2}\mathbf{e}_j\mathsf{E}[\mathbf{W}\mathbf{W}^{\top}\mathbf{\Lambda}\mathbf{W}\mathbf{W}^{\top}]\mathbf{e}_j^{\top}-\lambda_j^2\right].
\end{align*}
The result of $\mathbf{e}_j\mathsf{E}[\mathbf{W}\mathbf{W}^{\top}\mathbf{\Lambda}\mathbf{W}\mathbf{W}^{\top}]\mathbf{e}_j^{\top}$ can be derived by representing $\mathbf{W} = [\mathbf{w}_1^{\top},\mathbf{w}_2^{\top},\cdots,\mathbf{w}_I^{\top}]^{\top}$, where $\mathbf{w}_{i}$ has i.i.d. $\mathcal{N}(0,1)$ entries and is independent with $\mathbf{w}_{j}$ with $i\neq j$. In specific, there is
\begin{align*}
    &\mathbf{e}_j\mathsf{E}[\mathbf{W}\mathbf{W}^{\top}\mathbf{\Lambda}\mathbf{W}\mathbf{W}^{\top}]\mathbf{e}_j^{\top}\\
    & = \lambda_j\mathsf{E}[\mathbf{w}_j\mathbf{w}_j^{\top}\mathbf{w}_j\mathbf{w}_j^{\top}] + \sum_{i\neq j}\lambda_i\mathsf{E}[\mathbf{w}_j\mathsf{E}[\mathbf{w}_i^{\top}\mathbf{w}_i]\mathbf{w}_j^{\top}]\\
    & = tM[\mathsf{Tr}(\mathbf{\Lambda}) + (tM +1)\lambda_j].
\end{align*}
Putting the above results together yields 
\begin{equation*}
    (\sigma_i^2-\sigma_j^2)\mathsf{E}[\langle\tilde{\mathbf{u}}_i,\mathbf{u}_j\rangle^2]\leq \frac{4[\lambda_j^2 + \lambda_j\mathsf{Tr}(\mathbf{\Lambda})]}{tM(\sigma_i^2-\sigma_j^2)},
\end{equation*}
which completes the proof.
\subsection{Proof of Theorem~\ref{Theorem:probability}}\label{Apdx:probability}
Using Lemma~\ref{Lemma:UPofEachTerm}, we rewrite the error bound as 
\begin{align*}
    &\Vert(\mathbf{I}_I-\tilde{\mathbf{U}}\tilde{\mathbf{U}}^{\top})\mathbf{X}\Vert_F^2-\sum_{i\geq r+1}\sigma_i^2\\
    &\leq g(\mathbf{W})+ \sum_{i=1}^r\sum_{j\geq r+1}\frac{4\lambda_j^2}{\sigma_i^2-\sigma_j^2},
\end{align*}
where the involved random part is defined as $g(\mathbf{W})\overset{\triangle}{=}\mathsf{Tr}\left(\mathbf{W}\mathbf{W}^{\top}\mathbf{\Lambda}\mathbf{W}\mathbf{W}^{\top}\mathbf{\Lambda}_1\right)- \mathsf{Tr}\left(\mathbf{W}\mathbf{W}^{\top}\mathbf{\Lambda}_2\right)$ with $\mathbf{\Lambda}_1 = \sum_{i=1}^r\sum_{j\geq r+1}\frac{4}{\sigma_i^2-\sigma_j^2}\frac{\lambda_j}{(tM)^2}\mathbf{e}_j^{\top}\mathbf{e}_j$
and $
    \mathbf{\Lambda}_2 = \sum_{i=1}^r\sum_{j\geq r+1}\frac{8\lambda_j^2}{(\sigma_i^2-\sigma_j^2)tM}\mathbf{e}_j^{\top}\mathbf{e}_j$. It then follows that the probabilistic error bound is upper bounded as 
\begin{align*}
    &\mathsf{Pr}\left[\Vert(\mathbf{I}_I-\tilde{\mathbf{U}}\tilde{\mathbf{U}}^{\top})\mathbf{X}\Vert_F^2-\sum_{i\geq r+1}\sigma_i^2\geq \mathsf{E}[g(\mathbf{W})] + \epsilon\right]\\
    &\leq \mathsf{Pr}\left[g(\mathbf{W})-\mathsf{E}[g(\mathbf{W})]\geq \epsilon\right].
\end{align*}
Next, rewrite $\mathbf{W} = [\tilde{\mathbf{w}}_1,\tilde{\mathbf{w}}_2,\cdots,\tilde{\mathbf{w}}_{tM}]$ and the random variable $g(\mathbf{W})$ can be rewritten as
\begin{equation*}
    g(\mathbf{W}) =\sum_{m_1,m_2}\tilde{\mathbf{w}}_{m_1}^{\top}\mathbf{\Lambda}\tilde{\mathbf{w}}_{m_2}\tilde{\mathbf{w}}_{m_2}^{\top}\mathbf{\Lambda}_1\tilde{\mathbf{w}}_{m_1} - \sum_{m_3}\tilde{\mathbf{w}}_{m_3}^{\top}\mathbf{\Lambda}_2\tilde{\mathbf{w}}_{m_3}.
\end{equation*}
Let $\hat{\mathbf{w}}_m$ be an independent copy of $\tilde{\mathbf{w}}_m$ and then there is
\begin{align*}
    &|g^{\prime}\left(\{\tilde{\mathbf{w}}_{M}\}_{M\neq m},\tilde{\mathbf{w}}_m\right) - g^{\prime}\left(\{\tilde{\mathbf{w}}_{M}\}_{M\neq m},\hat{\mathbf{w}}_m\right)|\\
    &=|\hat{\mathbf{w}}_m^{\top}\mathbf{\Lambda}_2\hat{\mathbf{w}}_m-\tilde{\mathbf{w}}_m^{\top}\mathbf{\Lambda}_2\tilde{\mathbf{w}}_m \\
    &\quad+ \tilde{\mathbf{w}}_m^{\top}\mathbf{\Lambda}\tilde{\mathbf{w}}_m\tilde{\mathbf{w}}_m^{\top}\mathbf{\Lambda}_1\tilde{\mathbf{w}}_m - \hat{\mathbf{w}}_m^{\top}\mathbf{\Lambda}\hat{\mathbf{w}}_m\hat{\mathbf{w}}_m^{\top}\mathbf{\Lambda}_1\hat{\mathbf{w}}_m \\
    &\quad + \sum_{m_1\neq m}\tilde{\mathbf{w}}_{m_1}^{\top}\mathbf{\Lambda}(\tilde{\mathbf{w}}_{m}\tilde{\mathbf{w}}_{m}^{\top} - \hat{\mathbf{w}}_{m}\hat{\mathbf{w}}_{m}^{\top})\mathbf{\Lambda}_1\tilde{\mathbf{w}}_{m_1}  \\
    &\quad+ \sum_{m_2\neq m}\tilde{\mathbf{w}}_{m_2}^{\top}\mathbf{\Lambda}_1(\tilde{\mathbf{w}}_{m}\tilde{\mathbf{w}}_{m}^{\top} - \hat{\mathbf{w}}_{m}\hat{\mathbf{w}}_{m}^{\top})\mathbf{\Lambda}\tilde{\mathbf{w}}_{m_2}|.
\end{align*}
Note that the above equation does not have an upper bound since the Gaussian random variables go from minus infinity to infinity. To endow on $g(\mathbf{W})$ the bounded difference property, the Gaussian concentration is exploited as follows. Specifically, a $\mathcal{N}(0,1)$ variable $w$ can be smaller than a threshold, say $\kappa>1$, with the probability of $p(|w|\leq \kappa) = \mathsf{erf}\left(\frac{\kappa}{\sqrt{2}}\right)\overset{\triangle}{=}p_{\kappa}$, where $\mathsf{erf}\left(\cdot\right)$ denotes the error function. Hence, let the event that the abstract value of the last $I-r$ elements in the vectors $\tilde{\mathbf{w}}_m$ and $\hat{\mathbf{w}}_m$ are bounded by $\kappa$ be denoted by $\mathsf{BD}$ and its complement by $\mathsf{UBD}$. Then, there are $\mathsf{Pr}(\mathsf{BD}) = p_{\kappa}^{tM(I-r)}$ and $\mathsf{Pr}(\mathsf{UBD}) = 1-p_{\kappa}^{tM(I-r)}$. Hence, with the probability of $\mathsf{Pr}(\mathsf{BD})$, $|g^{\prime}\left(\{\tilde{\mathbf{w}}_{M}\}_{M\neq m},\tilde{\mathbf{w}}_m\right) - g^{\prime}\left(\{\tilde{\mathbf{w}}_{M}\}_{M\neq m},\hat{\mathbf{w}}_m\right)|$ can be upper bounded as
\begin{align*}
    &|g^{\prime}\left(\{\tilde{\mathbf{w}}_{M}\}_{M\neq m},\tilde{\mathbf{w}}_m\right) - g^{\prime}\left(\{\tilde{\mathbf{w}}_{M}\}_{M\neq m},\hat{\mathbf{w}}_m\right)| \\
    &\leq \kappa^4\mathsf{Tr}(\mathbf{\Lambda})\mathsf{Tr}(\mathbf{\Lambda}_1) - \kappa^2\mathsf{Tr}(\mathbf{\Lambda}_2) + 2(tM-1)\kappa^4\mathsf{Tr}(\mathbf{\Lambda})\mathsf{Tr}(\mathbf{\Lambda}_1)\\
    & =  \sum_{i=1}^r\sum_{j\geq r+1}\frac{8\lambda_j}{(\sigma_i^2-\sigma_j^2)tM}\left[\frac{tM-1/2}{tM}\kappa^4\mathsf{Tr}(\mathbf{\Lambda})-\lambda_j\kappa^2\right]\\
    &\leq 2\kappa^4\sum_{i=1}^r\sum_{j\geq r+1}\frac{4\lambda_j}{(\sigma_i^2-\sigma_j^2)tM}\left[\mathsf{Tr}(\mathbf{\Lambda})+\lambda_j\right]\\
    & = 2\kappa^4 \mathsf{E}[g(\mathbf{W})],
\end{align*}
which allows us to leverage the concentration theorem shown in Lemma~\ref{Lemma:McDiarmid} to give
\begin{equation*}
    \mathsf{Pr}\left[g(\mathbf{W})-\mathsf{E}[g(\mathbf{W})]\geq \epsilon| \mathsf{BD}\right]\leq \exp\left(-\frac{\epsilon^2}{2\kappa^8 \mathsf{E}^2[g(\mathbf{W})]}\right).
\end{equation*}
Concluding both cases of $\mathsf{BD}$ and $\mathsf{UBD}$, the probabilistic error bound can be expressed as 
\begin{align*}
    &\mathsf{Pr}\left[g(\mathbf{W})-\mathsf{E}[g(\mathbf{W})]\geq \epsilon\right]\\
    & = \mathsf{Pr}\left[g(\mathbf{W})-\mathsf{E}[g(\mathbf{W})]\geq \epsilon| \mathsf{BD}\right]\mathsf{Pr}[\mathsf{BD}] \\
    &\quad + \mathsf{Pr}\left[g(\mathbf{W})-\mathsf{E}[g(\mathbf{W})]\geq \epsilon|\mathsf{UBD}\right]\mathsf{Pr}[\mathsf{UBD}]\\
    & \leq \exp\left(-\frac{\epsilon^2}{2\kappa^8 \mathsf{E}^2[g(\mathbf{W})]}\right)p_{\kappa}^{tM(I-r)} \\
    &\quad+ \mathsf{Pr}\left[g(\mathbf{W})-\mathsf{E}[g(\mathbf{W})]\geq \epsilon|\mathsf{UBD}\right](1-p_{\kappa}^{tM(I-r)})\\
    &\leq \exp\left(-\frac{\epsilon^2}{2\kappa^8 \mathsf{E}^2[g(\mathbf{W})]}\right)p_{\kappa}^{tM(I-r)} + 1-p_{\kappa}^{tM(I-r)},
\end{align*}
where the last inequality is due to the fact that a conditional probability is always smaller than $1$. Finally, with proper algebraic substitution, we have
\begin{align*}
    &\mathsf{Pr}\left[\Vert(\mathbf{I}_I-\tilde{\mathbf{U}}\tilde{\mathbf{U}}^{\top})\mathbf{X}\Vert_F^2\leq\mathsf{E}[g(\mathbf{W})](1+\epsilon)+\sum_{i\geq r+1}\sigma_i^2\right]\\
    &\geq \left[1-\exp\left(-\frac{\epsilon^2}{2\kappa^8}\right)\right]p_{\kappa}^{tM(I-r)},
\end{align*}
which completes the proof.

\subsection{Proof of Lemma~\ref{Lemma:selection}}\label{Apdx:selection}
First of all, for any $i\leq r$ and $j\geq r+1$, we have $\sigma_i^2-\sigma_{j}^2 \geq \sigma_r^2-\sigma_{r+1}^2$, which gives
\begin{align*}
    &\frac{1}{\tilde{M}}\sum_{i=1}^r\sum_{j\geq r+1}\frac{\lambda_j^2+\lambda_j\mathsf{Tr}(\mathbf{\Lambda})}{\sigma_i^2-\sigma_j^2} \\
    &\leq \frac{r}{\tilde{M}(\sigma_r^2-\sigma_{r+1}^2)}\sum_{j\geq r+1}\lambda_j[\lambda_j+\mathsf{Tr}(\mathbf{\Lambda})].
\end{align*}
Then, define $C =\sum_{k}\mathsf{Tr}\left(\mathbf{X}_k^{\top}\mathbf{X}_k\right) = \sum_{i=1}^I\sigma_i^2$ and $\eta_{\mathrm{th}}=\zeta_{\mathsf{th}} C$. It follows that $\sigma_j^2 \leq \sigma_i^2\leq C/r$, $\forall j\geq r+1>i$, which further gives $r\lambda_j \leq r\sigma_j^2 + \frac{r\sigma^2}{2}\zeta_{\mathsf{th}}C\leq (1+ \frac{r\sigma^2}{2}\zeta_{\mathsf{th}})C$. As a result, one can obtain
\begin{equation*}
    \sum_{j\geq r+1}r\lambda_j[\lambda_j+\mathsf{Tr}(\mathbf{\Lambda})]\leq (1+ \frac{r\sigma^2}{2}\zeta_{\mathsf{th}})C\sum_{j\geq r+1}[\lambda_j+\mathsf{Tr}(\mathbf{\Lambda})],
\end{equation*}
where the summation term can be upper bounded as
\begin{align*}
    &(I-r)\mathsf{Tr}(\mathbf{\Lambda}) + \sum_{j\geq r+1}\lambda_j\\
    &\leq (I-r)C + (I-r)\frac{I\sigma^2}{2}\zeta_{\mathsf{th}}C +  \frac{I-r}{r}C +(I-r)\frac{\sigma^2}{2}\zeta_{\mathsf{th}}C \\
    &\leq \frac{I-r}{(I+1)r}(1 + \frac{r\sigma^2}{2}\zeta_{\mathsf{th}})C.
\end{align*}
Putting the above results together yields
\begin{align*}
    &\frac{1}{\tilde{M}}\sum_{i=1}^r\sum_{j\geq r+1}\frac{\lambda_j^2+\lambda_j\mathsf{Tr}(\mathbf{\Lambda})}{\sigma_i^2-\sigma_j^2}\\
    &\leq \frac{(I-r)C^2}{(\sigma_r^2-\sigma_{r+1}^2)(I+1)r}\frac{(1 + \frac{r\sigma^2}{2}\zeta_{\mathsf{th}})^2}{\tilde{M}},
\end{align*}
where $\frac{(I-r)C^2}{(\sigma_r^2-\sigma_{r+1}^2)(I+1)r}$ can be treated as a constant independent from the variable $\zeta_{\mathsf{th}}$. This completes the proof.

\bibliographystyle{IEEEtran}
\bibliography{Ref}

\end{document}